\begin{document}

\title{On the Power of Dominated Players in Team Competitions
\titlenote{This work was supported by the National Basic Research Program of China Grant 2011CBA00300, 2011CBA00301, the Natural Science Foundation
of China Grant 61033001, 61361136003, 61303077, 61561146398, a Tsinghua Initiative Scientific Research Grant and a China Youth 1000-talent program.}}

\numberofauthors{3}

\author{
\alignauthor
Kai Jin\\
       \affaddr{Institute for Interdisciplinary Information Sciences}\\
       \affaddr{Tsinghua University}\\
       \affaddr{Beijing, China}\\
       \email{cscjjk@gmail.com}
\alignauthor
Pingzhong Tang\\
       \affaddr{Institute for Interdisciplinary Information Sciences}\\
       \affaddr{Tsinghua University}\\
       \affaddr{Beijing, China}\\
       \email{kenshinping@gmail.com}
\alignauthor
Shiteng Chen\\
       \affaddr{Institute for Interdisciplinary Information Sciences}\\
       \affaddr{Tsinghua University}\\
       \affaddr{Beijing, China}\\
       \email{cst100@gmail.com}
}

\maketitle

\begin{abstract}
We investigate multi-round team competitions between two teams, where each team selects one of its players simultaneously in each round and each player can play at most once. The competition defines an extensive-form game with perfect recall and can be solved efficiently by standard methods. We are interested in the properties of the subgame perfect equilibria of this game.

We first show that uniformly random strategy is a subgame perfect equilibrium strategy for both teams when there are no redundant players (i.e., the number of players in each team equals to the number of rounds of the competition). Secondly, a team can safely abandon its weak players if it has redundant players and the strength of players is transitive.

We then focus on the more interesting case where there are redundant players and the strength of players is not transitive.
In this case, we obtain several counterintuitive results. First of all, a player might help improve the payoff of its team, even if it is dominated by the entire other team.
We give a necessary condition for a dominated player to be useful. We also study the extent to which the dominated players can increase the payoff.

These results bring insights into playing and designing general team competitions.
\end{abstract}


\terms{Algorithms, Economics, Theory}

\keywords{Team competition, dominated players, subgame perfect equilibrium}

\newtheorem{theorem}{Theorem}
\newtheorem{lemma}{Lemma}
\newtheorem{claim}{Claim}
\newtheorem{corollary}{Corollary}
\newtheorem{conjecture}{Conjecture}
\newdef{definition}{Definition}
\newdef{example}{Example}
\newdef{remark}{Remark}
\newdef{question}{Question}

\section{Introduction}

We investigate a type of {\em team competitions} where there are two teams, each with a number of players, competing against each other. The competition proceeds in a fixed number of rounds. In each round, each team simultaneously sends out a player to a match. The result of the match is then revealed according to a probabilistic strength matrix between players. The selected players cannot compete in the subsequent rounds. The competition proceeds to the next round if there is one; or terminated otherwise. The format of the competition and the strength matrix are common knowledge to both teams. The final payoff of each team is the number of matches it wins. We also consider another commonly seen form where each team gets payoff $1$ if it wins strictly more matches than the other team, $0$ if ties, and $-1$ if it wins less matches. Clearly, the competition defines a standard {\em extensive-form game}, or more precisely, a \emph{stacked matrix game} (\cite{lanctot:Goof}). We are interested in the sub-game perfect equilibria of the game, i.e., a strategy profile that specifies for each team which player to play at each round.

We are particularly interested in a situation where at least one team has more players than the number of rounds in the competition. As a result, some players will never have chance to participate in any match. The main agenda of this paper is to understand to what extent can the presence of additional players affect the payoff of both teams. In particular, we ask the following questions:
\begin{enumerate}
\item Can the presence of additional weakest teammate, a teammate whose row in the strength matrix is strictly dominated by any other row, help increase the payoff of the team?
\item Can the presence of additional dominated teammate, a teammate that always loses to any player in the opponent team, help increase the payoff of the team?
\end{enumerate}

It might appear intuitive that the answers to both questions are negative. For the first question, it seems that the weakest teammate will never have a chance to participate in any match since one can always replace him by a better teammate and increase payoff. For the second question, it might seem more obvious since the dominated teammate will lose any matches thus must be replaced by a better teammate. To our surprise, we find that the answers to both questions are affirmative.

\subsection{Our contributions}

We first show that uniformly random strategy is a sub-
game perfect equilibrium strategy for both teams when there
are no redundant players (i.e., the number of players in each
team equals to the number of rounds of the competition).
The uniformly random strategy always picks the unmatched player uniformly at random in each round.

Then, we consider the general case where at least one team has redundant players.
We first study the case where the strength of players is transitive,
which means that the players can be rearranged in a queue so that each of them is weaker than its successor.
We prove that, a team can safely abandon its weak players if it has
redundant players and the strength of players is transitive.
Therefore, this case reduces to the case where there are no redundant players.

Finally, we focus on the case where there are redundant players and the strength of players is not transitive.
In this case, we obtain a number of counterintuitive results.
Most importantly, a player might help improve its team's payoff,
even if it is dominated by the entire opposing team.
We give a necessary condition for a dominated player to be useful, which alternatively suggest that a particular utility function (named $U_E$ below) is more reasonable in team competition.
Our results imply that a team can increase its utility by recruiting additional dominated players.
We further show that, the optimal number of dominated players to recruit can scale with the number of rounds.
More precisely, this number can be $\Theta(T)$ if there are $T$ rounds.
Last but not least, we study the limitation of dominated players.

\subsection{Motivations}

Our model is first motivated by the Chinese horse race story described in \cite{Tang09} (see also \cite{wiki_horse}).
It represents one of the most popular forms of horse races where each team ranks its horses to match sequentially.
Moreover, the Swaythling Cup, as known as World Table Tennis Championships, follows the same model described in our paper: each team adaptively selects a ranking of three players and brings two additional substitutes. In fact, this has been one of the most popular formats of team competition in table tennis.
In addition, the card game Goofspeil (\cite{lanctot:Goof,wiki_goofspeil}) also falls into nearly the same model as described in our paper.

\subsection{Related works}

Tang, Shoham and Lin~\cite{Tang09,Tang10} study a team competition setting where the number of players equals the number of rounds and both teams must determine the ordering of players upfront, before the competition starts. They put forward competition rules that are truthful while satisfy other desirable properties. The main difference between their work and ours is that we do not design new mechanisms but study game theoretical properties of commonly used competition rules. The differences also lie in that the strategies are adaptive in our setting and each team can have more players than the number of matches.

The strategic aspects of team competition have also been under scrutiny of computer scientists due to a recent Olympic scandal in badminton, where several teams deliberately throw matches in order to avoid a strong opponent in the next round. The phenomenon has been discussed in depth in a series of algorithmic game theory blogposts by Kleinberg~\cite{Kleinberg12} and Procaccia~\cite{Procaccia13}.

A parallel literature has been concerned with the strategic aspect of tournament seeding~\cite{Hwang1982,Rosen85,Knuth87,Schwenk2000,Vu09,Altman09}. It is well known that there are cases where by strategic seeding and structuring, any player can be winner in knockout tournament. Various game theoretical questions, such as player-optimal seeding, complexity of manipulation and incentives to guarantee strategyproofness, have been investigated in this literature.

\section{The model}

Team~1 has a set of $m$ players $\{A_1,\ldots, A_m\}$. Team~2 has a set of $n$ players $\{B_1,\ldots, B_n\}$.
A competition between team 1 and 2 is a tuple $G(T,P,U)$ where,
\begin{enumerate}
\item $T$ is the number of rounds.
\item In each round, each team simultaneously selects one of its players that have not been selected yet.
\item $P$ is a probabilistic matrix that describes the relative strength between players, with $P_{i,j}$ denoting the probability that $A_i$ wins against $B_j$ and $1-P_{i,j}$ the probability for $A_i$ to lose to $B_j$.
\item $U: [T]\rightarrow R$ denotes the utility function of each team. The utility function only depends on the number of rounds $t$ each team wins, i.e., it can be represented by $U(t)$. This also implies that both teams have the same utility functions.
\item The parameters $n, m, T, P, U$ are common knowledge to both teams, and historical plays are perfectly observable. It is assured that $P_{i,j}\in [0,1]$ for all $i,j$
and that $m\geq T$ and $n\geq T$ so that there are enough players to complete the competition.
\end{enumerate}

The following utility functions $U_E$ and $U_M$ are two commonly seen ones:

\begin{equation}
U_E(t)=t-T/2. \quad U_M(t)=\left\{\begin{array}{cc}
                                    1 & t>T/2 \\
                                    0 & t=T/2 \\
                                    -1 & t<T/2
                                  \end{array}
\right..
\end{equation}

In other words, $U_E$ describes a competition where a team's utility is exactly the number of rounds it wins (minus some constant $T/2$); while $ U_M(t)$ describes a competition where a team's utility is whether it wins more than its opponent. Notice that, when $U=U_E$ or $U=U_M$, we have $U(t)+U(T-t)=0$, hence both utility functions define a zero-sum game.
In this paper we always assume that $U(T)+U(T-t)=0$.

\subsection{Example: simultaneous card games}\label{subsect:card}

The models above formulates the standard team competitions as commonly seen under the context of sports, but shall not be limited to sports. The following is an instance of card games that fall into our framework.

Suppose that Alice and Bob each has a deck of three cards. In each deck one card is in suit~$\heartsuit$ and two cards are in suit~$\spadesuit$.
They play three rounds; in each round Alice and Bob select one card and they reveal the cards simultaneously.
If they select cards in same suit (both in $\heartsuit$ or both in $\spadesuit$), Alice wins this round; otherwise Bob wins this round.
The one who wins two or three rounds gets utility 1; the other one wins zero or one rounds and it gets utility -1.

This game can be conveniently represented in our model using the following parameters:
\begin{displaymath}
m=n=T=3, P=\left(
                    \begin{array}{ccc}
                      1 & 0 & 0 \\
                      0 & 1 & 1 \\
                      0 & 1 & 1 \\
                    \end{array}
                  \right),U=U_M.
\end{displaymath}
For this game, according to our first theorem, a (subgame perfect) equilibrium strategy for both players exists,
and it is to just to play uniformly random. It follows that Alice has expected utility $-1/3$ and Bob has $+1/3$.

\subsection{Extensive-form game with perfect recall}

Any particular instance $G(T,P,U)$ of our team competition is an extensive-form game.
In this game, a history can be described by a tuple $(k,\textbf{a},\textbf{b},\textbf{c})$ where:
\begin{itemize}
\item  $k$ indicates the number of rounds that has been played;
\item  $\textbf{a}$ is a $k$-dimensional vector which stores the players selected by Team~1 in the past $k$ rounds;
\item  $\textbf{b}$ stores the players selected by Team~2;
\item  $\textbf{c}$ is a $k$-dimensional $0-1$ vector which stores the results in the first $k$ rounds, where $0$ corresponds to a lose by Team~1 and $1$ corresponds to a win by Team~1.
  \end{itemize}
A behavioral strategy in this game is a mapping from every history to a probability distribution over actions.
That is, at each history, the strategy of each team is to pick the next player according to a probability distribution.
By Kuhn' Theorem (\cite{Course99,KuhnTHM}), there is a subgame perfect equilibrium (SPE), in which both teams use behavioral strategies.
According to the SPE, a value $V(H)$ can be defined for each history $H$, which indicates the expected utility that Team~1 would get at the end of the game if it is now at history $H$.
Note that each history is the root of a subgame and so the value of a history is the same as the value of the subgame.

\subsection{Computing SPE}\label{subsect:dp}

By backward induction, one can easily get
\begin{lemma}
Two histories have the same value if they have selected the same players to play (but may be in different orders) and Team~1 won the same number of rounds.
\end{lemma}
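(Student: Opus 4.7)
The plan is to prove this by backward induction on $T-k$, the number of remaining rounds. First I would fix two histories $H_1=(k,\textbf{a}_1,\textbf{b}_1,\textbf{c}_1)$ and $H_2=(k,\textbf{a}_2,\textbf{b}_2,\textbf{c}_2)$ satisfying the hypothesis---namely, the set of picks recorded in $\textbf{a}_1$ equals that in $\textbf{a}_2$ (possibly in a different order), similarly for $\textbf{b}_1$ and $\textbf{b}_2$, and the number of $1$s in $\textbf{c}_1$ equals that in $\textbf{c}_2$; call this common win count $w$. In the base case $k=T$ no further rounds are played, so the terminal payoff in either history is $U(w)$ and $V(H_1)=V(H_2)=U(w)$ immediately.

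For the inductive step, I would observe that because the used-player sets coincide, the sets $S_A$ and $S_B$ of still-\emph{available} players for each team are identical at $H_1$ and at $H_2$. Hence the simultaneous stage game at the current round has exactly the same action space $S_A\times S_B$ in both histories. For any pure pair $(A_i,B_j)\in S_A\times S_B$ and any realized single-round outcome $c\in\{0,1\}$, the successor histories obtained from $H_1$ and $H_2$ by appending $(A_i,B_j,c)$ continue to satisfy the lemma's hypothesis: the same player is appended to each team's used list and the same bit is appended to the win vector. By the inductive hypothesis, these two successor histories carry a common value; denote by $V^{+}_{i,j}$ this value when $c=1$ and by $V^{-}_{i,j}$ the value when $c=0$. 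Consequently the expected-continuation payoff matrix
\[
M_{i,j} \;=\; P_{i,j}\,V^{+}_{i,j} \;+\; (1-P_{i,j})\,V^{-}_{i,j}
\]
is literally the same at $H_1$ and at $H_2$.

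To conclude I would invoke that the value of the current subgame is the minimax value of the zero-sum matrix game with payoffs $M_{i,j}$, whose existence is guaranteed by von Neumann's theorem; since this matrix game is identical under $H_1$ and under $H_2$, the two values agree, which closes the induction. I do not anticipate a genuine obstacle. The only point worth flagging is that, unlike plain backward induction in a perfect-information extensive-form game, each stage is a simultaneous-move game, so the inductive step evaluates a matrix-game value rather than a bare $\max$ or $\min$; but equality of continuation values makes the two stage games identical, so their values coincide without further work.
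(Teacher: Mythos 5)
Your proposal is correct and is exactly the argument the paper has in mind: the paper offers no written proof beyond the phrase ``by backward induction,'' and its dynamic-programming recursion in the following subsection (value of a history class $=$ value of the matrix game built from continuation values) is precisely the structure your induction formalizes. The key observation---that identical used-player sets and win counts make the successor classes identical, hence the stage matrices and their minimax values coincide---is the intended content, so there is nothing to add.
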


Based on the above lemma, the histories can be partitioned to equivalence classes, such that
each equivalence class corresponds to a four-tuple $(k,X,Y,w)$: $k$ is a number in $[T]$ which denotes the number of past rounds;
$X$ is a subset of $A$ of size $k$; $Y$ is a subset of $B$ of size $k$; $X,Y$ denote the players that have played;
$w$ is a number in $[k]$ which denotes how many rounds Team~1 has won so far.

\medskip In the following, we show in detail how to compute the value of each equivalence class via dynamic programming.

Let $V[k,X,Y,w]$ denote the expected utility of Team~1 when the history belongs to class $(k,X,Y,w)$.

Clearly, we have $V[k,X,Y,w]=U(w)$ when $k=T$.

When $k<T$, computing $V[k,X,Y,w]$ reduces to computing the value of the matrix game $M(k,X,Y,w)$ where the matrix $M(k,X,Y,w)$ is defined as follows.

It consists of $m-k$ rows and $n-k$ columns. Each row corresponds to a player in $A-X$, and each column corresponds to a player in $B-Y$.
The cell corresponding to $A_i,B_j$ equals to the expect utility of Team~1 when Team~1 and Team~2 respectively make action $A_i$ and $B_j$ on the current state $(k,X,Y,w)$, which equals
\begin{displaymath}
\begin{gathered}
V[k+1,X+\{A_i\}, Y+\{B_j\}, w + 1] \cdot  P_{i,j} +\\
V[k+1,X+\{A_i\}, Y+\{B_j\}, w] \cdot  (1- P_{i,j}).
\end{gathered}
\end{displaymath}

The reason behind the above definition of $M(k,X,Y,w)$ is as follows.
If the two teams select $A_i,B_j$ in this round, Team~1 has probability $P_{i,j}$ to win this round and hence the history becomes $(k+1,X+\{A_i\},Y+\{B_j\},w+1)$;
besides, Team~1 has probability $(1-P_{i,j})$ to lose this round and hence the history becomes $(k+1,X+\{A_i\},Y+\{B_j\},w)$.

We can compute the value of all equivalent classes of histories according to the above induction.
In fact, by computing these values, we also find a subgame perfect behaviorial strategy for both players.
To see this, suppose that $(k,X,Y,w)$ is a non-terminal equivalent class of history.
On solving the matrix game $M[k,X,Y,w]$ we find the strategies for all the histories in the history class $(k,X,Y,w)$.

\section{Uniformly Random Strategies}

The next theorem states that, if there are no redundant players, uniformly random is an equilibrium strategy for the teams.
It holds for arbitrary utility function including $U_E$ and $U_M$.

\begin{definition}The \emph{uniformly random strategy} is a behavioral strategy, in which a team always selects from the remaining players uniformly at random in each round.
\end{definition}

\begin{theorem}\label{thm:random}
When both teams have no redundant players (i.e. $n=m=T$), then it is a SPE when both teams apply the uniformly random strategy.
\footnote{Note that there could be other SPEs. For example, when players in Team~1 always lose, any strategies for the two teams form a SPE.}
\end{theorem}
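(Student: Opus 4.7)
The plan is to prove the following \emph{Key Lemma} by induction on $T$: \emph{if Team~2 plays the uniformly random strategy at every history, then for any behavioral strategy of Team~1, the realized pairing of $A$-players with $B$-players over the $T$ rounds is distributed as a uniformly random perfect matching between $\{A_1,\ldots,A_T\}$ and $\{B_1,\ldots,B_T\}$.} For the inductive step, let $p_i$ denote Team~1's marginal probability of playing $A_i$ in round~$1$. Team~2 chooses each $B_j$ with probability $1/T$, so the round-$1$ pair equals $(A_i,B_j)$ with probability $p_i/T$. Conditioning on that pair and on the Bernoulli outcome $r_1$, the remainder is a $(T-1)$-round instance on $A\setminus\{A_i\}$ versus $B\setminus\{B_j\}$ with Team~2 still playing uniform; the induction hypothesis gives that the sub-matching is uniform over the $(T-1)!$ possibilities. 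For any target matching $M_0$,
\[
\Pr[M = M_0] \;=\; \sum_{(A_i,B_j)\in M_0} \frac{p_i}{T}\cdot\frac{1}{(T-1)!} \;=\; \frac{1}{T!},
\]
since each matching contains exactly one pair per $A_i$ and $\sum_i p_i = 1$, which collapses the sum regardless of Team~1's round-$1$ mixture.

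Once the Key Lemma is established, the theorem follows quickly. Fix Team~2 to play uniform and let Team~1 use any strategy. Conditional on the matching $M$, the round results are independent $\mathrm{Bernoulli}(P_{i,j})$ draws for $(A_i,B_j)\in M$, so the distribution of Team~1's win count $W$ depends only on $M$. Since $M$ is always uniform, $E[U(W)]$ is a constant independent of Team~1's strategy, hence uniform is a best response for Team~1. A symmetric argument (swapping the roles of the two teams) shows uniform is also a best response for Team~2, so both teams playing uniformly at random form a Nash equilibrium. To upgrade this to subgame perfection, I would observe that every equivalence class $(k,X,Y,w)$ reachable in play is itself an instance of the same model with $T-k$ rounds, no redundant players, the $(A\setminus X)\times (B\setminus Y)$ sub-strength matrix, and shifted utility $U'(w') := U(w+w')$; the identical argument then applies verbatim in every such subgame.

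The main obstacle is the Key Lemma. The delicate point is that Team~1's round-$t$ action may depend on all prior outcomes $r_1,\ldots,r_{t-1}$ and on the specific $B$-players Team~2 has revealed, so a priori Team~1 might exploit this adaptivity to bias the matching distribution. The proof circumvents this by collapsing the dependence round by round: at each inductive step, Team~1's current-round mixture $\{p_i\}$ enters only through $\sum_i p_i = 1$, which erases its influence on the probability of any target matching. Everything else is routine verification.
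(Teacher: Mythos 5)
Your proof is correct and takes essentially the same route as the paper: your Key Lemma is exactly the paper's Lemma~2 (proved there by the same induction, phrased over pure strategies of the adaptive team plus linearity rather than over round-one marginals $p_i$), and your equilibrium argument --- one team playing uniform makes the opponent's expected payoff a constant independent of its strategy, hence neither side can profitably deviate, with the identical reasoning repeated in every subgame $(k,X,Y,w)$ --- is the paper's proof of Theorem~1, with the subgame-perfection step spelled out more explicitly than the paper's ``can be similarly extended.'' The one point to note is that your claim that, conditional on the matching, the results are independent Bernoulli draws does not follow from the matching-marginal statement of the Key Lemma alone but from its joint (matching, results) version, which your induction yields at no extra cost by carrying the result probabilities through --- and the paper makes the identical implicit step when it treats $E(\text{utility of Team~1}\mid Z_s)$ as a constant depending only on $s$.
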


We apply the following lemma for proving Theorem~\ref{thm:random}.

\begin{lemma}\label{lemma:random}
Suppose that there are no redundant players. Let $\mathbb{S}$ denote the set of all matchings between $\{A_1,\ldots,A_T\}$ and $\{B_1,\ldots,B_T\}$.
If Team~1 \textbf{or} Team~2 applies the uniformly random strategy, then the probability that the competition ends with any fixed matching in $\mathbb{S}$ is exactly $1/(T!)$.
\end{lemma}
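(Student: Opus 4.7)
My plan is to prove the lemma by induction on the number of remaining rounds, using a per-history strengthening: for every partial history $H_k$ after $k$ rounds (with the players $X, Y$ already used, $|X|=|Y|=k$), if Team~1 follows the uniformly random strategy from round $k+1$ onward, then every bijection $\sigma : A-X \to B-Y$ is realized by the remaining rounds with probability exactly $1/(T-k)!$, regardless of Team~2's (history-dependent, possibly randomized) strategy. The lemma itself is then the case $k=0$; the symmetric half in which Team~2 is the uniform player follows by swapping the roles of the two teams.

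The inductive step leans on the simultaneous-move structure. Since Team~2 commits to its round $k+1$ player before seeing Team~1's round $k+1$ choice, Team~2's distribution $q$ over $B-Y$ is fixed by the past history $H_k$. For a target bijection $\sigma$, the event ``the remaining rounds realize $\sigma$'' decomposes cleanly over Team~1's uniform pick $A_i \in A-X$: one needs Team~2 to simultaneously pick $B_{\sigma(i)}$ (contributing $q_{\sigma(i)}$), and then the subgame to realize the restriction of $\sigma$ to $(A-X) \setminus \{A_i\}$. The round $k+1$ win/lose coin flip only updates the score $w$ without changing the set of surviving players, so the induction hypothesis applies uniformly in both branches and each contributes $1/(T-k-1)!$.

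The decisive step is the summation: because $\sigma$ is a bijection, $\{q_{\sigma(i)} : A_i \in A-X\}$ is a permutation of $\{q_j : B_j \in B-Y\}$ and hence sums to $1$, yielding
\[ \Pr[\sigma \mid H_k] \;=\; \sum_{A_i \in A-X} \frac{1}{T-k} \cdot q_{\sigma(i)} \cdot \frac{1}{(T-k-1)!} \;=\; \frac{1}{(T-k)!}, \]
independently of $q$. The main pitfall I want to guard against is that Team~2's strategy may depend on the full ordered history (who played which round, and who won), not merely on the equivalence class $(k,X,Y,w)$ used in the dynamic-programming section; formulating the induction per $H_k$ sidesteps this entirely, since the bijection property of $\sigma$ is exactly what is needed to collapse an arbitrary $q$. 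Everything else is routine bookkeeping.
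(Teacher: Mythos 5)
Your proof is correct, and it takes a genuinely different route from the paper's in how it handles Team~2's randomization. The paper runs its induction (on the number of players) only against an arbitrary \emph{pure} strategy of Team~2: if Team~2 deterministically selects $B_i$ in the first round, uniform Team~1 pairs it with each $A_j$ with probability $1/T$, the induction hypothesis finishes the subgame, and randomized strategies are then dispatched in one line by noting that any mixed strategy is a convex combination of pure strategies, so the outcome probability $1/(T!)$ is preserved. Your argument instead absorbs Team~2's randomization directly into the induction: at each history Team~2's round distribution $q$ over $B-Y$ is fixed by the simultaneous-move structure, and
\[
\sum_{A_i\in A-X} \frac{1}{T-k}\, q_{\sigma(i)}\, \frac{1}{(T-k-1)!} \;=\; \frac{1}{(T-k)!}
\]
holds regardless of $q$ because $\sigma$ is a bijection, so $\sum_i q_{\sigma(i)}=\sum_j q_j=1$. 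What the paper's route buys is brevity; what yours buys is self-containedness and precision: it works natively with behavioral strategies (no implicit appeal to Kuhn's theorem to equate behavioral and mixed strategies), and it makes explicit a point the paper glosses over --- that Team~2's later choices may depend on the full ordered history including win/lose outcomes --- which your per-history induction hypothesis (the conditional probability is $1/(T-k-1)!$ in both the win and the lose branch) handles cleanly.
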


\begin{proof}[of Lemma~\ref{lemma:random}]
First, suppose that Team~1 applies the uniformly random strategy while Team~2 applies an arbitrary pure strategy.
In this case, we claim that the probability that competition ends with any fixed matching is exactly $1/(T!)$.
This can be proved by induction on the number of players as the following.
Assume that Team~2 selects $B_i$ in the first stage, then the probability that it meets $A_j$ is $1/T$ for any $j$.
By induction hypothesis, any matching has equal probability to appear.
Then, since that any mixed strategy is a linear combination of the pure strategies, the above claim implies the statement given in Lemma~2.
\end{proof}

\begin{proof}[of Theorem~\ref{thm:random}] Let $\mathbb{S}$ denote the same set as in Lemma~\ref{lemma:random}.
Since there are no redundant players, a game will always end with some matching in $\mathbb{S}$.
For a matching $s\in \mathbb{S}$, let $Z_s$ denote the event that the game ends with this matching.
According to the lemma, if Team~1 applies the uniformly random strategy, it will get expected utility
\begin{displaymath}
  \sum_{s\in \mathbb{S}} E(\text{the utility of Team~1}\mid Z_s) / (T!)
\end{displaymath}
Similarly, if Team~2 applies the uniformly random strategy, it will get
\begin{displaymath}
  \begin{gathered}
    \sum_{s\in \mathbb{S}} E(\text{the utility of Team~2}\mid Z_s) / (T!)\\
    =\sum_{s\in \mathbb{S}} -E(\text{the utility of Team~1}\mid Z_s) / (T!)
  \end{gathered}
\end{displaymath}
Therefore, it is a Nash Equilibrium if both teams apply the uniformly random strategy.
The argument can be similarly extended to show that it is SPE.
\end{proof}

In the remainder of this paper, we focus on the case where there are redundant players.

\begin{claim} If there are redundant players, then the uniformly random strategy may \emph{not} be a SPE strategy.
\end{claim}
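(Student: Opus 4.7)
The claim is an existence-style negation, so it suffices to exhibit a single instance with redundant players in which the uniformly random strategy is not a SPE strategy. The intuition is simple: a team with more players than rounds can refuse to use its weaker players altogether, whereas uniform random assigns them positive probability and thereby wastes utility whenever those players are strictly worse.

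First I will construct the minimal witnessing instance. Take $T=1$, $n=1$, $m=2$, the strength matrix $P_{1,1}=1$, $P_{2,1}=0$, and $U=U_E$. Here Team~1 has one redundant player, $A_1$ always beats $B_1$, and $A_2$ always loses. The instance respects every model condition ($m\ge T$, $n\ge T$, $P_{i,j}\in[0,1]$, and $U_E(t)+U_E(T-t)=0$).

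Next I will compute Team~1's expected utility under two candidate strategies. Under the uniformly random strategy, Team~1 plays each of $A_1,A_2$ with probability $1/2$, yielding
\begin{equation*}
\tfrac{1}{2}\cdot U_E(1)+\tfrac{1}{2}\cdot U_E(0)=\tfrac{1}{2}\cdot\tfrac{1}{2}+\tfrac{1}{2}\cdot\bigl(-\tfrac{1}{2}\bigr)=0.
\end{equation*}
Under the deterministic strategy "always play $A_1$", Team~1 obtains $U_E(1)=\tfrac{1}{2}$. Since $\tfrac{1}{2}>0$, uniform random is not a best response for Team~1 against any strategy of Team~2, so it cannot be part of a SPE.

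\textbf{Main obstacle.} There is essentially none; the construction and calculation are both immediate. The only care required is to confirm that the tiny instance respects the model's assumptions (in particular $m,n\ge T$ and $U(t)+U(T-t)=0$) so that it is a legal game under the paper's conventions. If one wishes a less degenerate illustration, the same phenomenon can be reproduced with $T=2$ and an extra uniformly dominated $A_3$, but the one-round instance already suffices to establish the claim.
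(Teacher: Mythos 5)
Your proof is correct: the one-round instance is legal under the model's assumptions, the utility computation is right, and since Team~2's strategy is forced (it has a single player), uniform random for Team~1 is not a best response to any opponent strategy, hence not part of any SPE. However, you took a genuinely different — and strictly easier — route than the paper. Your witness makes the team \emph{with} redundant players randomize over a strictly worse player; this is exactly the case the paper dismisses in one line as "obvious for a team with redundant players" and does not bother to prove. The paper instead constructs an example ($m=T=2$, $n=3$, with $P$ having rows $(0,0,1)$ and $(1,1,0)$) showing the less obvious and more consequential phenomenon: uniform random fails to be an SPE strategy even for the team \emph{without} redundant players, because the opponent's redundancy distorts the stage matrix game (the SPE mixture there is $(1/3,2/3)$, worth $-1/3$, versus $-1/2$ for the $(1/2,1/2)$ mixture). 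Both arguments establish the literal claim; yours buys minimality and a two-line calculation, while the paper's example buys the stronger message that redundant players anywhere in the game destroy the conclusion of Theorem~\ref{thm:random} for both teams, which is the point the rest of the section develops.
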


This claim is obvious for a team with redundant players; but less obvious for a team without redundant players.
Here we give an example in which the uniformly random strategy is not a SPE strategy for a team with no redundant players.

\begin{example}
Let $m=T=2,n=3$, $U=U_E=U_M$ (note that $U_E=U_M$ when $T=2)$, $P=\left(
         \begin{array}{ccc}
           0 & 0 & 1 \\
           1 & 1 & 0 \\
         \end{array}
       \right)$.
\end{example}

According to method given in Subsection~\ref{subsect:dp}, we can compute that $M(0,\varnothing,\varnothing,0)=\left(
         \begin{array}{ccc}
           -1 & -1 & 1 \\
           0 & 0 & -1 \\
         \end{array}
       \right)$.
Therefore, in the SPE, the behavior for Team~1 on the initial state
should be select $A_1$ with probability $1/3$ and select $A_2$ with probability $2/3$.
This guarantees (expected) utility $-1/3$.
If to the contrary that Team~1 adopts the uniformly random strategy, it should select $A_1,A_2$ with probability $1/2$,
which would only guarantees (expected) utility $-1/2$.

\subsection{Transitive strength}

\begin{definition}
Player $A_i$ is \emph{weaker} than its teammate $A_j$, denoted by $A_i\leq A_j$, if for any opponent $B_k$, the probability of ``$A_i$ wins against $B_k$'' is less than or equal to the probability of ``$A_j$ wins against $B_k$''. Similar for Team~2.

Team 1 $\{A_1,\ldots,A_m\}$ are \emph{transitive} if there is a permutation $\pi$ of $1,\ldots,m$, such that $A_{\pi(1)} \leq \ldots \leq  A_{\pi(m)}$. Similar for Team~2.
\end{definition}	

\begin{definition}
A utility function $U$ is \emph{monotone} if $U(t+1)\geq U(t)$ for $t\in 0,\ldots\,T-1$.
\end{definition}

\begin{theorem}\label{thm:transitive}
Assume the utility function  is monotone. We have that, (1) If $A_m \leq \ldots \leq A_1$, Team~1 has a SPE strategy which only selects $A_1,\ldots,A_T$. (2) Symmetrically, if $B_n \leq \ldots \leq B_1$, Team~2 has a SPE strategy which only selects $B_1,\ldots,B_T$.
\end{theorem}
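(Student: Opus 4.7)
\medskip
\noindent\textbf{Proof proposal.} The plan is to prove, by backward induction on the round index $k$, the following strengthening of part~(1): at every subgame state $(k,X,Y,w)$, Team~1 has an optimal (mixed) strategy whose support lies within the $T-k$ strongest players in $A-X$. Applied to the initial state $(0,\varnothing,\varnothing,0)$, the $T$ strongest in $\{A_1,\ldots,A_m\}$ are exactly $\{A_1,\ldots,A_T\}$; since this property is inherited round by round, Team~1 obtains a SPE strategy that never touches $A_{T+1},\ldots,A_m$. Part~(2) follows by the symmetric argument.

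Before the induction I would first establish a monotonicity lemma: $V[k,X,Y,w+1]\geq V[k,X,Y,w]$. For any fixed strategies $\sigma,\tau$ in this subgame, the final utility is $U(w+W)$, where $W$ is the random number of wins Team~1 accrues in rounds $k+1,\ldots,T$; the distribution of $W$ depends on $(\sigma,\tau,k,X,Y)$ but not on $w$. Monotonicity of $U$ gives $U(w+1+W)\geq U(w+W)$ pointwise, so the inequality survives taking expectation, then $\min_\tau$, then $\max_\sigma$.

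The base case $k=T$ of the induction is immediate since $V[T,X,Y,w]=U(w)$. For the inductive step at $(k,X,Y,w)$, it suffices to show that each row $A_i$ outside $\mathrm{top}(A-X,\,T-k)$ is weakly dominated in the matrix game $M[k,X,Y,w]$ by the row of $A_{j_0}$, the weakest player in $\mathrm{top}(A-X,\,T-k)$. The key combinatorial identity is
\[
\mathrm{top}(A-X-\{A_i\},\,T-k-1) \;=\; \mathrm{top}(A-X-\{A_{j_0}\},\,T-k-1) \;=\; \mathrm{top}(A-X,\,T-k)\setminus\{A_{j_0}\},
\]
because removing a non-top player $A_i$ and removing the weakest top player $A_{j_0}$ both leave the same set of $T-k-1$ strongest players. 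By the inductive hypothesis, in both subgames $(k+1,X+\{A_i\},Y+\{B_l\},w')$ and $(k+1,X+\{A_{j_0}\},Y+\{B_l\},w')$ Team~1 has an optimal strategy supported on this common set; since Team~2's pool is also identical ($B-Y-\{B_l\}$), the two subgame values coincide. Denoting their common value by $V_{w'}$, the $(A_{j_0},B_l)$- and $(A_i,B_l)$-entries of $M[k,X,Y,w]$ differ by $(P_{j_0,l}-P_{i,l})(V_{w+1}-V_w)$, which is non-negative because $A_{j_0}\geq A_i$ gives $P_{j_0,l}\geq P_{i,l}$ and the monotonicity lemma gives $V_{w+1}\geq V_w$. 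Hence $A_i$ is weakly dominated, and an optimal strategy exists on $\mathrm{top}(A-X,\,T-k)$, closing the induction.

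The main technical hurdle is the equality of the two subgame values after the swap, since this is exactly what lets the factor $(V_{w+1}-V_w)$ emerge cleanly and separates the transitivity of $P$ from the monotonicity of $U$. This equality depends crucially on choosing the dominator to be the \emph{weakest} top player $A_{j_0}$, because that is precisely the choice for which the post-swap top-$(T-k-1)$ sets coincide; it also relies on phrasing the inductive hypothesis in the restricted-support form rather than as a bare value statement about the root. A more naive exchange argument using an arbitrary top player as dominator would not give equal subgame values and would require a more delicate coupling.
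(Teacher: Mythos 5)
Your proposal is correct, and its engine coincides with the paper's: a backward induction showing that, in the stage matrix at $(k,X,Y,w)$, the row of the \emph{weakest} top-$(T-k)$ player weakly dominates every non-top row, via the identity $M[j_0,l]-M[i,l]=(P_{j_0,l}-P_{i,l})\,(V_{w+1}-V_w)\geq 0$ (this is exactly the paper's choice of its rank $T-k$ player $A_u$, for exactly your reason). Where you genuinely differ is in how the two factors are supplied. The paper's Lemma~\ref{lemma:transitive} runs a \emph{joint} induction on two claims: a comparison claim --- states with the same top $T-k$ players and $w_1\geq w_2$ have $V(H_1)\geq V(H_2)$ --- and the row-dominance claim; the comparison claim at level $k+1$ delivers both the post-swap equality of values and $V_{w+1}\geq V_w$, and the dominance claim at level $k$ is then fed back to prove the comparison claim at level $k$ (by transplanting the restricted equilibrium strategy of the worse state onto the better one). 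You instead carry a single inductive claim (restricted support) and obtain the two factors separately: $V[k,X,Y,w+1]\geq V[k,X,Y,w]$ from a standalone coupling argument (the two subgames are isomorphic up to a shift of the terminal count, so monotone $U$ passes through expectation, $\min$ and $\max$), and the post-swap equality by reducing both subgames to the common game on the shared top set. Your decomposition is arguably cleaner, since it isolates exactly where monotonicity of $U$ enters (only the coupling lemma) and where transitivity of the $A$-players enters (only the top-set identity). Its cost is the step you yourself flag: ``optimal strategy supported on the common set, same Team~2 pool, hence equal values'' is not automatic from the hypothesis at the two compared states alone; a full write-up must invoke the inductive hypothesis recursively at \emph{all} deeper states, so that restricted play never leaves the common set and Team~2 then faces literally the same game in both subgames, making both values equal to that of the common restricted game. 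The paper's two-claim induction avoids this reduction entirely, at the price of carrying the extra comparison claim through the induction.
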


By combining Theorem~\ref{thm:random} and Theorem~\ref{thm:transitive}, we can immediately get the following

\begin{corollary}\label{col:transitive}
When players in each team are transitive and $U$ is monotone, there is a simple SPE strategy for both teams as follows.
Assume that $A_m \leq \ldots \leq A_1$ and $B_n \leq \ldots \leq B_1$.
Then, the SPE strategy for Team~1 is to select an unused player in $A_1,\ldots,A_T$ uniformly random in each round;
a SPE strategy for Team~2 is to select an unused player in $B_1,\ldots,B_T$ uniformly random in each round.
\end{corollary}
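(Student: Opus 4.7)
The plan is to combine the two preceding theorems. First, by Theorem~\ref{thm:transitive}\,(1) Team~1 has a SPE strategy $\sigma_1^*$ of $G$ supported on $\{A_1,\ldots,A_T\}$, and by Theorem~\ref{thm:transitive}\,(2) Team~2 has a SPE strategy $\sigma_2^*$ supported on $\{B_1,\ldots,B_T\}$. Since $G$ is a zero-sum extensive-form game with perfect recall (we have $U(t)+U(T-t)=0$), SPE strategies are interchangeable at every subgame, so $(\sigma_1^*,\sigma_2^*)$ is itself a SPE. In particular, the value of $G$ coincides with the value of the reduced game $G'$ obtained from $G$ by deleting $A_{T+1},\ldots,A_m$ and $B_{T+1},\ldots,B_n$.

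Second, $G'$ has no redundant players, so Theorem~\ref{thm:random} applies and the uniformly random strategies form a SPE of $G'$. The strategy described in the corollary is the natural extension of this SPE to $G$: at each subgame, play uniformly at random over the unused players in $\{A_1,\ldots,A_T\}$ (respectively $\{B_1,\ldots,B_T\}$). I would verify that this profile is a SPE of $G$ subgame by subgame. At any subgame reached on-path, the continuation is itself a team competition with transitive players (the ordering restricts to the remaining players) and monotone shifted utility $U(w+\cdot)$, so Theorems~\ref{thm:transitive} and~\ref{thm:random} apply recursively and the continuation of uniformly random on the remaining top-$T$ roster is optimal.

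The main obstacle is ruling out opponent deviations that use weak redundant players. I would appeal to the substitution idea underlying Theorem~\ref{thm:transitive}: any strategy $\sigma_2$ for Team~2 that ever selects some $B_j$ with $j>T$ can be replaced, via transitivity, by a strategy $\widetilde{\sigma}_2$ that instead selects a stronger unused player from $\{B_1,\ldots,B_T\}$, and this substitution only decreases Team~1's expected payoff. Because Team~1's uniformly random strategy already secures value $V$ in $G'$ against every top-$T$ strategy of Team~2, it therefore secures at least $V$ in $G$ against $\sigma_2$; a symmetric argument handles Team~2's guarantee. Propagating this substitution carefully to every subgame --- most delicately the off-path ones --- is the technical step that makes the ``immediate'' combination of the two theorems into a complete proof.
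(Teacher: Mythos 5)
Your proposal takes essentially the same route as the paper: the paper's entire proof of Corollary~\ref{col:transitive} is the single remark that it follows ``by combining Theorem~\ref{thm:random} and Theorem~\ref{thm:transitive}'', which is exactly your plan of restricting both teams to their top $T$ players via Theorem~\ref{thm:transitive} and then invoking the uniformly random SPE of Theorem~\ref{thm:random} on the reduced game. Your additional care (zero-sum interchangeability of the two one-sided SPE strategies, and the substitution argument against deviations to weak players) supplies details the paper omits entirely, and is sound in spirit, so the two proofs agree in approach with yours being the more explicit.
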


Theorem~\ref{thm:transitive} and Corollary~\ref{col:transitive} have many applications.
In the real word, the utility function is monotone and moreover, in many situations such as in board games or some sport games, it is indeed the case that the players are transitive.

\medskip We prove Theorem~\ref{thm:transitive}~(1) in the next; the claim (2) is symmetric.

We first provide two basic terminologies which are necessary for understanding the subsequent proof.
Suppose that $A_m \leq \ldots \leq A_1$ and that $A'$ is a subset of $A$ and $A'=(A_{i[1]},\ldots A_{i[|A'|]})$, where $i[1] < \ldots < i[|A'|]$.
Then, for any $0\leq C\leq |A'|$, the top $C$ players of $A'$ refers to $\{A_{i[1]},\ldots,A_{i[C]}\}$,
and the rank $C$ player of $A'$ refers to $A_{i[C]}$.

\begin{lemma}\label{lemma:transitive}Suppose that $U()$ is monotone.

\begin{enumerate}
\item Consider a pair of history classes $H_1=(k, X_1, Y, w_1)$ and $H_2=(k,X_2, Y, w_2)$. We claim that,
if the top $T-k$ players of $A-X_1$ and the top $T-k$ players of $A-X_2$ are the same and $w_1\geq w_2$, then $V(H_1)\geq V(H_2)$.

\item Let $H=(k,X,Y,w)$ be a non-terminal history class. Let $A_u$ be the rank $T-k$ player in $A-X$, and let $A_v$ be any player in $A-X$ that is not a top $T-k$ player.
Then, the row in $M(k,X,Y,w)$ that corresponds to $A_u$ dominates the row that corresponds to $A_v$.
It further implies that, there is an equilibrium strategy at history $H$ (for Team~1) which only selects the top $T-k$ unmatched players to play.
\end{enumerate}
\end{lemma}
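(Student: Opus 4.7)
The plan is to prove parts (1) and (2) jointly by backward induction on $k$, from $k=T$ down to $k=0$. At the base case $k=T$, claim (1) reduces to $V(H_i)=U(w_i)$ together with monotonicity of $U$, and claim (2) is vacuous since $H$ must be non-terminal.

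For the inductive step at level $k<T$, I first establish claim (2) using claim (1) at level $k+1$. Write $\mathcal{T}$ for the set of top $T-k$ players in $A-X$, so $A_u$ is the weakest member of $\mathcal{T}$ and $A_v\notin\mathcal{T}$. Fix a column $B_j$ and compare the cell $M(k,X,Y,w)[A_u,B_j]$ with $M(k,X,Y,w)[A_v,B_j]$. The key bookkeeping is that after Team~1 plays either $A_u$ or $A_v$, the top $T-k-1$ players of the remaining set coincide: both equal $\mathcal{T}\setminus\{A_u\}$ (removing the rank-$(T-k)$ player $A_u$ from $\mathcal{T}$ leaves these $T-k-1$ players as the strongest in $A-X-\{A_u\}$; removing the weaker player $A_v$ leaves $\mathcal{T}$ intact, whose top $T-k-1$ is the same set). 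By the inductive hypothesis for (1), the two child values $V[k+1,X+\{A_u\},Y+\{B_j\},w+b]$ and $V[k+1,X+\{A_v\},Y+\{B_j\},w+b]$ are equal for $b\in\{0,1\}$; call these common values $V_0$ and $V_1$, and note $V_1\geq V_0$ by the same hypothesis applied with $w+1\geq w$. Since $A_v\leq A_u$ implies $P_{u,j}\geq P_{v,j}$, the difference of the two cells equals $(V_1-V_0)(P_{u,j}-P_{v,j})\geq 0$. Hence row $A_u$ weakly dominates row $A_v$ in $M(k,X,Y,w)$, so every row outside the top $T-k$ is dominated, and an equilibrium strategy supported on the top $T-k$ rows achieves the game value.

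Next I derive claim (1) at level $k$ from claim (2) at level $k$ and claim (1) at level $k+1$. Let $\mathcal{T}$ now denote the common top-$(T-k)$ set of $A-X_1$ and $A-X_2$. By claim (2), the value of each matrix game $M(k,X_i,Y,w_i)$ equals the value of its submatrix obtained by keeping only the rows in $\mathcal{T}$. For any $A_i\in\mathcal{T}$ and $B_j\in B-Y$, the top $T-k-1$ players of both $A-(X_1+\{A_i\})$ and $A-(X_2+\{A_i\})$ equal $\mathcal{T}\setminus\{A_i\}$; since $w_1+b\geq w_2+b$, the inductive hypothesis for (1) gives $V[k+1,X_1+\{A_i\},Y+\{B_j\},w_1+b]\geq V[k+1,X_2+\{A_i\},Y+\{B_j\},w_2+b]$ for $b\in\{0,1\}$. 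Taking the convex combination with coefficients $P_{i,j}$ and $1-P_{i,j}$ shows the restricted matrix for $H_1$ dominates entrywise that for $H_2$, so its game value is at least as large, which is exactly $V(H_1)\geq V(H_2)$.

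The only real obstacle is the bookkeeping: one must verify carefully that the top $T-k-1$ sets really coincide after the relevant player is removed, which is where the transitive strength order and the precise definition of the rank-$(T-k)$ player are used. Once that identification is in place, the rest is a straightforward monotone-coupling argument together with the elementary fact that weakly dominated rows can be discarded from a matrix game without changing its value.
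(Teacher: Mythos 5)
Your proposal is correct and follows essentially the same route as the paper: backward induction with the base case $k=T$ handled by monotonicity of $U$, claim (2) at level $k$ derived from claim (1) at level $k+1$ via the observation that the top $T-k-1$ sets coincide after removing $A_u$ or $A_v$ (giving the cell difference $(V_1-V_0)(P_{u,j}-P_{v,j})\geq 0$), and claim (1) at level $k$ derived from claim (2) plus entrywise domination of the top rows. The only cosmetic difference is in the last step, where you discard dominated rows and compare submatrix values, while the paper transfers the equilibrium strategy $\sigma$ of $H_2$ to $H_1$; these are equivalent arguments.
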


\begin{proof}
We prove it by backward induction.
When $k=T$, Claim 1 holds according to the monotone property of $U()$; and Claim 2 naturally holds since it is a terminal history.

Now, we argue that, for $0\leq k<T$, if the lemma holds for $k+1$, it also holds for $k$.

First, we prove Claim 2. Let us compare the two rows corresponding to $A_u$ and $A_v$.
Let us fix a column, say the one corresponding to $B_r$.
The cell corresponding to $(A_u,B_r)$ is
	\begin{displaymath}
        \begin{split}
         M[u,r]=&\underbrace{V(k+1,X+\{A_u\},Y+\{B_r\},w+1)}_a \cdot P_{u,r} \\
                &+ \underbrace{V(k+1,X+\{A_u\},Y+\{B_r\},w)}_b \cdot (1-P_{u,r})
        \end{split}
    \end{displaymath}
The cell corresponding to $(A_v,B_r)$ is
	\begin{displaymath}
        \begin{split}
         M[v,r]=&\underbrace{V(k+1,X+\{A_v\},Y+\{B_r\},w+1)}_{a'} \cdot P_{v,r}\\
                &+ \underbrace{V(k+1,X+\{A_u\},Y+\{B_r\},w)}_{b'} \cdot (1-P_{v,r})
         \end{split}
    \end{displaymath}
Notice that the top $T-k-1$ players in $A-X-\{A_u\}$ and $A-X-\{A_v\}$ are the same.
So, from the induction hypothesis, $a'\geq a \geq a' \geq b\geq b'\geq b$, i.e. $a=a'\geq b=b'$.

Since that $A_u$ is the top $T-k$ player while $A_v$ is not, player
$A_v$ is weaker than $A_u$, which means that $P_{u,r}\geq P_{v,r}$.

Combining the above arguments, we get that
    \begin{displaymath}
        M[u,r]-M[v,r]=(a - b) \cdot (P_{u,r}-P_{v,r}) \geq 0.
    \end{displaymath}
Therefore, $M[u,r]\geq M[v,r]$, and thus Claim~2 holds.

\medskip Then, we prove Claim 1.
Let $M_1$ denote $M(k,X_1,Y,w_1)$ and $M_2$ denote $M(k,X_2,Y,w_2)$ for short.
Suppose that $A_u$ is a top $T-k$ player in $A-X_1$ (which is also a top $T-k$ player in $A-X_2$) and that $B_r$ is any player in $B-Y$.

We know
	\begin{displaymath}
        \begin{split}
         M_1[u,r]=&V(k+1,X_1+\{A_u\},Y+\{B_r\},w_1+1) \cdot P_{u,r}+\\
                &V(k+1,X_1+\{A_u\},Y+\{B_r\},w_1) \cdot (1-P_{u,r})\\
         M_2[u,r]=&V(k+1,X_2+\{A_u\},Y+\{B_r\},w_2+1) \cdot P_{u,r}+\\
                &V(k+1,X_2+\{A_u\},Y+\{B_r\},w_2) \cdot (1-P_{u,r})
         \end{split}
    \end{displaymath}
By induction hypothesis, it follows that $M_1[u,r]\geq M_2[u,r]$.

Now, let $\sigma$ denote the equilibrium strategy at $H_2$ that only selects the top $T-k$ unmatched players to play. (Such a strategy exists according to Claim 2.)
Note that $\sigma$ is also a legal strategy at $H_1$. Let $\mu(H_1,\sigma)$ and $\mu(H_2,\sigma)$ respectively denote the utility of Team~1 when it applies strategy $\sigma$ on $H_1$ and $H_2$.
Then,
    \begin{displaymath}
        \begin{aligned}
            \mu(H_1,\sigma) &= &\min_{r:B_r\in Y}\sum_{u:A_u\in A-X_1}\sigma(A_u)\cdot M_1(u,r),\\
            \mu(H_2,\sigma) &= &\min_{r:B_r\in Y}\sum_{u:A_u\in A-X_2}\sigma(A_u)\cdot M_2(u,r).
        \end{aligned}
    \end{displaymath}

From the inequality $M_1(u,r)\geq M_2(u,r)$, we get $\mu(H_1,\sigma)\geq \mu(H_2,\sigma)$.
Moreover, we also have $V(H_1)\geq \mu(H_1,\sigma)$ and $V(H_2)=\mu(H_2,\sigma)$ (the equality is since that $\sigma$ is the equilibrium strategy on $H_2$).
Together, $V(H_1)\geq V(H_2)$.
\end{proof}

Finally, Claim 2 of Lemma~\ref{lemma:transitive} implies Theorem~\ref{thm:transitive}~(1).

\section{Nontransitive strength}

\begin{definition}
A player is said to be \emph{weakest}, if it is weaker than all its teammates; and is said to be \emph{dominated},
if it has 0 probability to win against any player in the opponent team.
\end{definition}

Assume that the utility function is monotone.
In the previous section, we show that if there are redundant players in Team~1 and if these players are transitive, then there is a SPE strategy for Team~1 which does not select the weakest player.
In other words, Team~1 can abandon the weakest one without decreasing its utility.
In this section, we show that the transitivity is essential for this to hold. We start by the following claim.

\begin{claim}
Suppose that Team~1 has redundant players and one of them is the weakest.
If the players in Team~1 are not transitive, Team~1 might decrease its utility by abandoning the weakest player.
\end{claim}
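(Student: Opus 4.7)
The plan is to prove the claim by exhibiting an explicit counterexample: a specific team competition $G(T,P,U)$ in which Team~1 has redundant players including a weakest one $A_m$, the players of Team~1 are not transitive, and yet the SPE value of $G(T,P,U)$ \emph{strictly exceeds} the SPE value of the game in which $A_m$ is abandoned (i.e.\ Team~1 is restricted to $A_1,\ldots,A_{m-1}$). Note that, since Team~1 can always commit in the full game to never playing $A_m$, removing $A_m$ can only weakly decrease Team~1's SPE value; the whole content of the claim is in forcing strict inequality.

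I would construct such an instance at the smallest scale permitting non-transitivity, e.g.\ $T=3$, $n=T=3$ and $m=T+1=4$, with $U=U_E$ or $U=U_M$. Choose $P$ so that the first three rows $A_1,A_2,A_3$ describe a rock-paper-scissors-style cycle against $B_1,B_2,B_3$ (making Team~1's top players pairwise incomparable, hence non-transitive), and let the last row make $A_4$ dominated by setting $P_{4,j}=0$ for every $j$ (so $A_4$ is in particular the weakest of Team~1). The verification then has two parts. First, the restricted game is $3$-vs-$3$ with no redundancy, and Theorem~\ref{thm:random} identifies the uniformly random strategy as a SPE, whose value is the expected $U$-score under a uniformly random matching and can be read off directly from $P$. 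Second, run the dynamic program of Subsection~\ref{subsect:dp} on the full $4$-vs-$3$ game, i.e.\ tabulate $V[k,X,Y,w]$ by backward induction from $k=T$ down to $k=0$, solving the matrix games $M(k,X,Y,w)$ along the way, and compute $V[0,\varnothing,\varnothing,0]$. Finally, check that this value strictly exceeds the value obtained in the first part.

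The main obstacle lies in choosing $P$ so that $A_4$ is actually used in equilibrium and strictly raises the root value, as opposed to having its row in $M(0,\varnothing,\varnothing,0)$ be dominated by a convex combination of the rows of $A_1,A_2,A_3$. By standard matrix-game duality, strict improvement requires that, against Team~2's minimax response in the restricted game, the row of $A_4$ in the root matrix of the full game strictly exceeds the restricted value---intuitively, $A_4$ must provide a ``safe'' subgame that exploits the particular column mix Team~2 is forced into at the restricted equilibrium. The most symmetric cyclic constructions tend to equalize the two values, so some asymmetry in $P$ (e.g.\ letting one of the top players win strictly more positions than the cycle alone would give) should be introduced to break the tie. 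A fully worked instance exhibiting this phenomenon is the natural subject of the next section, where the authors characterize when and by how much dominated players can improve a team's payoff.
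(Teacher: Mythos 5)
Your overall strategy---exhibit an explicit instance in which the SPE value with the weakest (indeed dominated) redundant player strictly exceeds the value after he is abandoned---is exactly the paper's strategy: the paper verifies the stronger Claim~\ref{claim:burden_is_useful} by example, which implies this claim since a dominated player is in particular the weakest. However, your proposal has a genuine gap: it never produces a verified instance, and the parameter family you commit to is partly impossible. With $n=T=3$, $m=4$, $A_4$ dominated and $U=U_E$, the paper's own Theorem~\ref{thm:special} applies ($m>n=T$, the redundant player is weaker than each of $A_1,A_2,A_3$, and $U=U_E$), so the two values are provably \emph{equal} no matter how you choose the top $3\times 3$ block. No amount of ``asymmetry'' in that block can break the tie: the obstruction is structural (Team~2 has no redundant players), not an accident of symmetry. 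Your diagnosis of the obstacle is also backwards in the $U_M$ case: the fully symmetric choice---$P$ equal to the identity on the top three rows, zero in the fourth row---does work for $U=U_M$; this is precisely the paper's Example~\ref{example:largest}, where the value rises from $-2/3$ to $0$ when the dominated player is kept. So the half of your parameter space that cannot work is the one you would need asymmetry to rescue (and cannot), while the half that works needs no asymmetry at all. Deferring the ``fully worked instance'' to a later section leaves the claim unproved.

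What actually makes a counterexample work under $U_E$ is giving Team~2 redundant players as well, which your family excludes. The paper's Example~\ref{example:useful} takes $T=2$, $m=n=3$, $U=U_E$, with $P_{1,1}=P_{2,2}=1$ and all other entries $0$. Then: (I) if $A_3$ is abandoned, Team~2 wins both rounds by opening with $B_3$ and finishing with whichever of $B_1,B_2$ beats the remaining player, so the value is $-1$; (II) if $A_3$ is available, Team~2 can only guarantee the first round by playing $B_3$, and if Team~1 opens with $A_3$ this wastes $B_3$, after which Team~2 cannot guarantee the second round; hence the value strictly exceeds $-1$. This two-line verification needs neither Theorem~\ref{thm:random} nor the dynamic program of Subsection~\ref{subsect:dp}. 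In short: same approach in spirit, but your proposal is missing the construction that makes the claim true, and the specific route you sketch ($U=U_E$ with $n=T$) cannot be completed.
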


This is somewhat counterintuitive; it might be intuitive that the weakest player has no chance to participate in any match since one
can always replace him by a better teammate and increase utility. Perhaps even more surprisingly, we have the following claim:

\begin{claim}\label{claim:burden_is_useful}
Suppose that Team~1 has redundant players and one of them is dominated by the other team (i.e., has no chance to win at all).
If the players in Team~1 are not transitive, Team~1 might decrease its utility by abandoning the dominated player.
\end{claim}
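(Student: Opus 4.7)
The plan is to prove the claim by exhibiting a concrete instance $G(T, P, U)$ in which one row of $P$ is identically zero (so Team~1 has a dominated player), the remaining rows are non-transitive, and the SPE value of Team~1 is strictly larger than the SPE value of the sub-instance obtained by removing that row. Because the claim only asserts existence, the proof reduces to a numerical verification once a suitable instance is found.

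The smallest parameters worth trying are $T = 3$, $n = 3$, $m = 4$, with $A_4$ dominated and $A_1, A_2, A_3$ arranged in a rock-paper-scissors / Chinese-horse-race style cycle against $B_1, B_2, B_3$; I would take $U = U_M$ so as to exploit its non-linearity in the number of wins. Without $A_4$ the sub-instance has $m = n = T = 3$, so by Theorem~\ref{thm:random} uniformly random is an SPE and the baseline value is simply the average of $U_M$ over the win counts of the $3!$ possible matchings. With $A_4$ present I would run the backward-induction procedure of Section~\ref{subsect:dp}: for each history class $(k, X, Y, w)$ solve the induced matrix game $M(k, X, Y, w)$ to obtain $V[k, X, Y, w]$, and finally compare $V[0, \emptyset, \emptyset, 0]$ against the baseline. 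The intuition for why $A_4$ can help is that it acts as a deliberate sacrifice: playing $A_4$ forfeits its round but forces Team~2 to burn a player, and mixing $A_4$ into the early rounds makes Team~2 reluctant to deploy the particular $B_j$ that would otherwise be most damaging against the non-$A_4$ subset of Team~1. Under the non-linear $U_M$, this shift in the distribution of total wins can translate into a strict utility gain even though $A_4$ never wins a round.

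The main obstacle is calibration: a priori, the row of $A_4$ in every induced matrix game might be weakly dominated by a convex combination of the other Team~1 rows, in which case the minimax values at every history are unchanged and the claim would fail for this instance. Non-transitivity of $A_1, A_2, A_3$ is a necessary ingredient, since by Theorem~\ref{thm:transitive} a transitive Team~1 can always abandon a weakest teammate (hence in particular the dominated one) without loss; but non-transitivity alone is not automatically sufficient. I would therefore iterate over cyclic $0/1$ patterns for the top three rows of $P$ (possibly with small perturbations to break ties in favor of using $A_4$) until the backward induction produces a strictly positive gap between the two root values, after which the remainder of the proof is a direct arithmetic verification of that gap.
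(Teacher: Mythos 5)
Your proposal is correct and takes essentially the same approach as the paper: the instance you propose ($T=3$, $n=3$, $m=4$, a cyclic/identity pattern for $A_1,A_2,A_3$ --- equivalent up to relabeling Team~2's players --- with dominated $A_4$ and $U=U_M$) is exactly the paper's Example~\ref{example:largest}, for which the paper's own computation gives $V(T,P,U_M)=0$ with $A_4$ present versus $V(T,P^\ast,U_M)=-2/3$ without, so your very first candidate already yields a strictly positive gap and the iteration over patterns you planned as a fallback is unnecessary. Two remarks worth noting: your insistence on $U_M$ is not merely convenient but essential for this instance, since under $U_E$ the paper computes $-1/2$ versus $-1/2$ (no gap, consistent with Theorem~\ref{thm:special}); and the paper's primary verification of the claim uses an even smaller instance (Example~\ref{example:useful}: $T=2$, $m=n=3$, $U=U_E$) certified by a two-line strategic argument --- Team~2 can guarantee winning both rounds only by opening with $B_3$, and once Team~1 can open with the dominated $A_3$ that guarantee collapses in the second round --- rather than by backward-induction computation.
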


The above claims confirms that, the weakest player or even dominated player could help its team.

We would now like to state the organization of the remainder of the section. In Subsection~\ref{subsect:useful}, we give examples that verify Claim~\ref{claim:burden_is_useful}, and we briefly explain the reason why we need dominated players.
In Subsection~\ref{subsect:useless}, we identify a special case where the weakest player can be abandoned without changing the utility.
In Subsection~\ref{subsect:number_of_burdens}, we consider the optimal number of dominated players that we may need to achieve maximum utility.
In Subsection~\ref{subsect:powerfulness_of_burdens}, we discuss the limitations of the dominated players.

\subsection{Dominated teammates can be helpful}\label{subsect:useful}

Let $V(T,P,U)$ denote the value of game $G(T,P,U)$.
Let $P^\ast$ denote the sub-matrix of $P$ by deleting the last row
(thus $G(T,P^\ast,U)$ is the game where Team~1 has abandoned $A_m$).

\begin{example}\label{example:useful}
Let $n=m=3,T=2$, $U=U_E$ (recall that $U_E(t)=t-T/2$), and
let $P=\left(
               \begin{array}{ccc}
                 1 & 0 & 0 \\
                 0 & 1 & 0 \\
                 0 & 0 & 0 \\
               \end{array}
             \right)$.
\end{example}

In Example~\ref{example:useful}, there are redundant players and the players in each team are not transitive. Besides, $A_3$ is a dominated player.
We argue the follows: (I) If $A_3$ is abandoned, Team~2 can win both rounds and hence $V(T,P^\ast,U)=-1$.
(II) If $A_3$ is in the team, Team~2 cannot win both rounds with certainty and that means $V(T,P,U)>-1$.
Combining (I) and (II), we get $V(T,P,U)>V(T,P^\ast,U)$, which implies Claim~\ref{claim:burden_is_useful}.

\begin{proof}[of (i)] If $A_3$ is abandoned, Team~2 can play as follows.
It chooses $B_3$ to win the first round.
If $B_3$ defeated $A_1$, it chooses $B_1$ in the second round to beat $A_2$;
otherwise, it chooses $B_2$ in the second round to beat $A_1$.
\end{proof}

\begin{proof}[of (ii)] If Team~2 wants to win with certainty in both rounds, it must select $B_3$ to play the first round.
However, if Team~1 selects the dominated player $A_3$ to play the first round, Team~2 cannot win the second round with certainty anymore.
\end{proof}

From this example, we see why a dominated player might be helpful for its team. The reason behind is similar to the horse race story described at the beginning of~\cite{Tang10}.

In the next, we give one more example. It gives, to our best knowledge, the largest decrease of utility by abandoning a dominated player.

\begin{example}\label{example:largest}
Let $m=4,n=T=3$. Let $U=U_E$ or $U=U_M$.
Let $P=\left(
         \begin{array}{ccc}
           1 & 0 & 0 \\
           0 & 1 & 0 \\
           0 & 0 & 1 \\
           0 & 0 & 0 \\
         \end{array}
       \right)
$.
\end{example}

According the method shown in Subsection~\ref{subsect:dp}, we can compute that
\footnote{Note that the value of $G(T,P^\ast,U_M)$ and $G(T,P^\ast,U_M)$ can be simply computed according to Theorem~\ref{thm:random} since there are no redundant players in these games.}
\begin{displaymath}
    \begin{aligned}
        V(T,P,U_M)&=0; & V(T,P^\ast,U_M)=&-2/3;\\
        V(T,P,U_E)&=-1/2; & V(T,P^\ast,U_E)=&-1/2.
    \end{aligned}
\end{displaymath}

So, for the game $G(T,P,U_M)$, we will lose utility as much as 2/3 if we abandon the dominated player.

In the following we explicitly state a SPE strategy for Team~1. In the first round it selects the dominated player $A_4$. Without loss of generality, assume that it loses to $B_1$.
In the second round, it selects $A_2,A_3$ uniformly random. So, there is $1/2$ chance that Team~1 wins this round.
Furthermore, if Team~1 wins the second round (say $A_2$ beats $B_2$) it can also wins the next (let $A_3$ beat $B_3$) and thus gets utility 1.
By this strategy, there is $1/2$ chance to get utility 1 and $1/2$ chance to get utility $-1$, so the expected utility is $0$.

However, for the game $(T,P,U_E)$, we do not lose any utility by abandoning the dominated player.
This is not a coincidence. In fact, this example belongs to a special case where the weakest player can indeed be abandoned. We show this in the next theorem.

\subsection{A case where the weakest player can be abandoned}\label{subsect:useless}

We have seen that the weakest redundant player is useless when the players are transitive (as proved in Theorem~\ref{thm:transitive})
but might be useful when the players are not transitive (as shown in the previous subsection).
So, the next question is:

\begin{question}
If the players are not transitive, on what cases the weakest player can be abandoned?
\end{question}

We get the following result.

\begin{theorem}\label{thm:special}
Suppose that Team~1 has redundant players but Team~2 does not. So, $m>T$ and $n=T$.
Suppose that each player in $A_{T+1},\ldots,A_m$ is weaker than each player in $A_1..A_T$. Let $U=U_E$.
Then, Team~1 can abandon all the players in $A_{T+1},\ldots,A_m$ without losing its utility.
\end{theorem}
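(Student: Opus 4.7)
The plan is to prove the two inequalities $V(T,P,U_E) \geq V(T,P^\ast,U_E)$ and $V(T,P,U_E) \leq V(T,P^\ast,U_E)$. The first is immediate: Team~1 in $G(T,P,U_E)$ can mimic any strategy of the smaller game by simply never selecting the weak players $A_{T+1},\ldots,A_m$. Combining Theorem~\ref{thm:random} with Lemma~\ref{lemma:random} applied to $G(T,P^\ast,U_E)$ (which has no redundant players) gives $V(T,P^\ast,U_E) = \tfrac{1}{T}\sum_{i=1}^{T} r_i - T/2$, where $r_i := \sum_j P_{i,j}$. The real work is the reverse inequality, which I would obtain by exhibiting a strategy for Team~2 in $G(T,P,U_E)$ that holds Team~1's utility down to this value; the natural candidate, in view of Theorem~\ref{thm:random}, is the uniform random strategy over $\{B_1,\ldots,B_T\}$.

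The heart of the proof is then to show that, against Team~2's uniform random strategy, no Team~1 strategy in $G(T,P,U_E)$ achieves expected wins exceeding $\tfrac{1}{T}\sum_{i=1}^{T} r_i$. Since $U_E$ is linear, this upper bound on expected wins converts via min-max into the desired upper bound on $V(T,P,U_E)$. I would prove it by induction on the number $T'$ of remaining rounds. Let $\bar X := X\cap\{A_1,\ldots,A_T\}$ (used strong), $r_i^{(Y)} := \sum_{j:\,B_j\notin Y} P_{i,j}$, and $N := \{A_1,\ldots,A_T\}\setminus\bar X$ (available strong; note $|N|\geq T'$ at every reachable state, since $|X|=|Y|$ always). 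Define $\Phi(\bar X, Y) := \tfrac{1}{T'}\sum_{i\in S^\ast} r_i^{(Y)}$, where $S^\ast$ is a top-$T'$ subset of $N$ ranked by $r^{(Y)}$. Writing $V^{\mathrm{uni}}(X,Y)$ for Team~1's optimal expected remaining wins against Team~2's uniform strategy, the inductive claim is $V^{\mathrm{uni}}(X,Y)\leq\Phi(\bar X,Y)$.

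The inductive step starts from $V^{\mathrm{uni}}(X,Y) = \max_{a\in A\setminus X}\bigl[\tfrac{r_a^{(Y)}}{T'} + \tfrac{1}{T'}\sum_b V^{\mathrm{uni}}(X+\{a\},Y+\{b\})\bigr]$, plugs in the induction hypothesis, and reduces to showing $\mathrm{DP}(a) := \tfrac{r_a^{(Y)}}{T'}+\tfrac{1}{T'}\sum_b \Phi(\overline{X+\{a\}},Y+\{b\}) \leq \Phi(\bar X,Y)$ for every $a\in A\setminus X$. In the balanced subcase $|N|=T'$, a direct telescoping using $\sum_b P_{i,b}=r_i^{(Y)}$ and $\sum_{i\in N}r_i^{(Y)}=T'\Phi(\bar X,Y)$ yields the exact identity $\sum_b\Phi(\overline{X+\{a\}},Y+\{b\})=T'\Phi(\bar X,Y)-r_a^{(Y)}$ when $a$ is strong, hence $\mathrm{DP}(a)=\Phi$. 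When $a$ is weak, the key estimate is $(T'-1)\,r_a^{(Y)}\leq\sum_b\min_{i\in N} r_i^{(Y+\{b\})}$, which follows from the weakness hypothesis via the pointwise comparison $r_i^{(Y+\{b\})}=\sum_{b'\neq b}P_{i,b'}\geq\sum_{b'\neq b}P_{a,b'}=r_a^{(Y+\{b\})}$ together with $\sum_b r_a^{(Y+\{b\})}=(T'-1)\,r_a^{(Y)}$; it gives $\mathrm{DP}(a)\leq\Phi$.

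The main obstacle I foresee is extending the inductive step to the unbalanced subcase $|N|>T'$, which is reached after Team~1 has played at least one weak player. In this case the top set $S^\ast(\bar X,Y+\{b\})$ can shift as $b$ varies, so the clean telescoping of the balanced case no longer goes through directly. Small examples strongly suggest that the bound $\mathrm{DP}(a)\leq \Phi(\bar X,Y)$ still holds (with equality only for strong $a$ lying in a top-$T'$ set), but the proof appears to require either a more refined ranking argument tracking how $S^\ast(\bar X,Y+\{b\})$ varies with $b$, or an auxiliary lemma to the effect that the ``excess'' strong players beyond the top $T'$ cannot improve Team~1's value beyond what the top-$T'$ subset alone guarantees. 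Establishing this unbalanced case carefully is the step I expect to require the most effort.
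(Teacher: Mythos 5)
Your outer skeleton coincides with the paper's: the easy inequality, the identification $V(T,P^\ast,U_E)=\frac{1}{T}\sum_{i\le T}r_i-T/2$ via Theorem~\ref{thm:random} and Lemma~\ref{lemma:random}, and the plan of capping Team~1 by letting Team~2 play uniformly at random over $\{B_1,\ldots,B_T\}$, exploiting linearity of $U_E$. The genuine gap is in your inductive invariant, and it is worse than you anticipate: the claim $V^{\mathrm{uni}}(X,Y)\le\Phi(\bar X,Y)$, with $\Phi$ the average of the top-$T'$ remaining row sums, is not merely hard in the unbalanced case $|N|>T'$ --- it is \emph{false} there. Consider the state with $T'=2$, available strong players $N=\{A_1,A_2,A_3\}$, remaining opponents $\{B_1,B_2\}$, and
\begin{displaymath}
P_{1,1}=1,\ P_{1,2}=0,\qquad P_{2,1}=0,\ P_{2,2}=1,\qquad P_{3,1}=P_{3,2}=1 .
\end{displaymath}
Here $r^{(Y)}=(1,1,2)$, so $\Phi=\frac{1}{2}(2+1)=\frac{3}{2}$; but Team~1 wins both remaining rounds with certainty by fielding the generalist $A_3$ first (it beats both opponents), observing which of $B_1,B_2$ survives, and then fielding the matching specialist. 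Hence $V^{\mathrm{uni}}=2>\Phi$. This state is reachable under the theorem's hypotheses: take $T=3$, add a dominated player $A_4$ (all-zero row) and an opponent $B_0$ with $P_{i,0}=0$ for all $i$; Team~1 plays $A_4$ in round one and Team~2's uniform draw happens to be $B_0$. The failure is structural: once Team~1 holds more strong players than remaining rounds, it enjoys option value from adaptivity (keeping specialists in reserve) that no single top-$T'$ set ranked by row sums can bound. Note also that your balanced-case step for weak $a$ already invokes the induction hypothesis at exactly such unbalanced children, so the collapse is not confined to one branch. There is no contradiction with the theorem itself: the subgame above is reached only with probability $1/3$, and its high value is offset by the branches in which $B_0$ survives --- an averaging across histories that a per-state invariant cannot see.

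The paper sidesteps per-state bookkeeping entirely. It keeps your first step, then proves a \emph{global} statement (Lemma~\ref{lemma:probability}, which the paper records as a technical probability fact with proof omitted): against Team~2's uniform strategy, for every Team~1 strategy $\sigma$ and every pair $(i,j)$, the probability $Q^\sigma_{i,j}$ that $A_i$ and $B_j$ ever meet during the entire game is at most $1/T$. Linearity of $U_E$ then gives $U^\sigma=\sum_j\sum_i Q^\sigma_{i,j}P_{i,j}-T/2$, and for each fixed column $j$ a one-line rearrangement (maximize $\sum_i c_i x_i$ subject to $0\le x_i\le 1/T$ and $\sum_i x_i=1$, where weak rows are dominated entrywise) bounds the column contribution by $\frac{1}{T}\sum_{i\le T}P_{i,j}$. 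The averaging over histories that defeats your invariant is built into the $Q^\sigma_{i,j}$. If you insist on an induction over states, the invariant that has a chance is the column-wise one, $\Phi'(\bar X,Y)=\frac{1}{T'}\sum_{b\notin Y}\bigl(\text{sum of the top }T'\text{ entries of column }b\text{ over }N\bigr)$, which equals your target at the root and equals $2=V^{\mathrm{uni}}$ in the counterexample above; but establishing it amounts to re-deriving the paper's lemma state by state, so the global route is the cleaner repair.
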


\noindent Before we give the proof, we make two remarks.

1. The condition $m>n=T$ is important for Theorem~\ref{thm:special}.
If both team got redundant players, then the claim in this theorem does not hold anymore.

2. From Example~\ref{example:largest}, we see that the claim in this theorem fails when $U=U_M$.
Therefore, as a comparison, by recruiting extra dominated players, a team can gain more utility when $U=U_M$, but
cannot when $U=U_E$. This may suggest that $U_E$ is more reasonable than $U_M$ in team competition.

\medskip We need the following lemma on proving Theorem~\ref{thm:special}.
It is a technical statement of probability theory. Proof omitted.

\begin{lemma}\label{lemma:probability}
Assume that $n=T$ and $A_i,B_j$ are any pair of players from the two teams.
Let $Q^{\sigma}_{i,j}$ denote the probability that $A_i$ meets $B_j$ in the game when
Team~2 applies the uniformly random strategy and Team~1 applies some strategy $\sigma$.
Then, \begin{equation}\label{eqn:1/T}
Q^{\sigma}_{i,j}\leq 1/T.
\end{equation}
\end{lemma}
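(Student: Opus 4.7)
The plan is induction on $T$. The base case $T=1$ is immediate: with a single round, Team~2's only player $B_1$ is matched with whichever $A_i$ Team~1 plays, so $Q^\sigma_{i,j}\le 1 = 1/T$.

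For the inductive step, fix $T\ge 2$, a Team~1 strategy $\sigma$, and players $A_i$, $B_j$. I would split $Q^\sigma_{i,j}$ according to whether $A_i$ and $B_j$ meet in round~$1$ or in some later round. Let $\sigma_1(A_i)$ denote the probability that Team~1 plays $A_i$ in round~$1$ under $\sigma$. Because Team~1's round-$1$ action $X_1$ and Team~2's round-$1$ action $Y_1$ are chosen simultaneously with only the empty history as context, $X_1$ and $Y_1$ are independent; combined with $P(Y_1=B_j)=1/T$, this yields $P(X_1=A_i,\,Y_1=B_j)=\sigma_1(A_i)/T$ for the round-$1$ contribution.

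For the later-round contribution, I would condition on the round-$1$ transcript $(X_1,Y_1,c_1)=(x,y,c)$. If $x=A_i$ or $y=B_j$, then $A_i$ and $B_j$ cannot meet in any later round. Otherwise, what remains is an honest $(T{-}1)$-round instance of the lemma: Team~2 has $T-1$ remaining players and, since its uniform strategy is memoryless, continues to play uniformly at random; Team~1 plays the restriction $\sigma'_{x,y,c}$ of $\sigma$ to histories extending this transcript; and $n' = T' = T-1$. The inductive hypothesis then bounds the conditional probability of $A_i$ meeting $B_j$ in the subgame by $1/(T-1)$. Using $P(X_1\ne A_i,\,Y_1\ne B_j)=(1-\sigma_1(A_i))(T-1)/T$ (again by independence of $X_1$ and $Y_1$), the later-round contribution is therefore at most $(1-\sigma_1(A_i))/T$.

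Summing the two contributions,
\[ Q^\sigma_{i,j} \;\le\; \frac{\sigma_1(A_i)}{T} + \frac{(1-\sigma_1(A_i))(T-1)}{T}\cdot\frac{1}{T-1} \;=\; \frac{1}{T}, \]
which closes the induction. I do not foresee a serious obstacle: the only things to check carefully are that the subgame after any round-$1$ transcript really is a valid instance of the lemma (which relies on Team~2's uniform strategy being memoryless and on $m-1\ge T-1$) and that $\sigma'_{x,y,c}$ is a legitimate Team~1 strategy in that subgame, both of which are routine. Note that the bound is tight, attained when Team~1 also plays uniformly at random, which is consistent with Lemma~\ref{lemma:random}.
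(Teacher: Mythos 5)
Your proof is correct. There is actually nothing in the paper to compare it against: the paper explicitly declines to prove this lemma, saying only that it ``is a technical statement of probability theory. Proof omitted.'' Your induction therefore fills a genuine omission, and it works. The key steps all check out: the round-1 contribution is exactly $\sigma_1(A_i)/T$ by independence of the simultaneous first-round moves; conditioning on any round-1 transcript that avoids both $A_i$ and $B_j$ yields a legitimate $(T-1)$-round instance of the lemma, because Team~2's uniform strategy restricted to the subgame is again the uniform strategy over its remaining $T-1$ players and because the lemma only requires $n=T$, not $m=T$, so $m-1\ge T-1$ suffices; and the two contributions sum to exactly $1/T$, closing the induction. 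The one point worth making fully explicit is the treatment of $\sigma$: if $\sigma$ is a mixed rather than behavioral strategy, the ``restriction of $\sigma$ to histories extending the transcript'' should be read as the Bayes-updated continuation strategy, which is a mixed strategy of the subgame; since the game has perfect recall, Kuhn's theorem lets you assume $\sigma$ is behavioral without loss of generality, after which your restriction is literally well-defined. Your closing remark that the bound is tight is consistent with Lemma~\ref{lemma:random}: when Team~1 also plays uniformly, every matching occurs with probability $1/(T!)$, so $Q_{i,j}=(T-1)!/T!=1/T$.
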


\begin{proof}[of Theorem~\ref{thm:special}]
We call $A_{T+1},\ldots,A_m$ the weak players.
When the weak players are abandoned, there are $T$ remaining players for each team.
By Theorem~\ref{thm:random}, the uniformly random strategy is a SPE strategy for Team~2.
To prove Theorem~\ref{thm:special}, the key idea is to show that even if Team~1 is allowed to select the weak player, it will not gain more utility if Team~2 keep using the uniformly random strategy.
On the other hand, it is obvious that Team~2 can't gain more utility. Therefore, the value of game does not change when the weak players are allowed to play.

When Team~1 abandons its weak players, its maximum utility is
\begin{displaymath}
  U^*=(\sum_{j=1..T}\sum_{i=1..T}\frac{1}{T}P_{i,j})-\frac{T}{2}.
\end{displaymath}

Let $Q^\sigma(i,j)$ be defined as Lemma~\ref{lemma:probability}.
The utility of Team~1 when it applies strategy $\sigma$ against the uniformly random strategy of Team~2 is
\begin{displaymath}
  U^\sigma=(\sum_{j=1..T}\sum_{i=1..m} Q^\sigma_{i,j}P_{i,j})-\frac{T}{2}
\end{displaymath}

We need to prove that $U^\sigma \leq U^*$,
and it reduces to show that for any fixed $j$ in $1..T$,
\begin{equation}\label{eqn:compare}
  \sum_{i=1..m}P_{i,j}Q^\sigma_{i,j}\leq \sum_{i=1..T}\frac{1}{T} P_{i,j}
\end{equation}

To prove (\ref{eqn:compare}), consider the following optimization problem:
\begin{displaymath}
\left\{
\begin{array}{cc}
  \texttt{Variables:} & x=(x_1,\ldots, x_m)\\
  \texttt{Parameters:} & c=(c_1,\ldots,c_m) \\
  \texttt{Guarantee:} & c_i\geq c_i' (\forall (i,i') \text{ such that } i\leq T<i') \\
  \texttt{Constrant~1:} & 0\leq x_i\leq \frac{1}{T} (\forall 1\leq i\leq m)\\
  \texttt{Constrant~2:} & \sum_{i=1}^{m} x_i=1 \\
  \texttt{Objective:} & \max f(x)=\sum_{i=1}^{m}c_ix_i
\end{array}
\right.
\end{displaymath}

Clearly, $f(x)$ is maximized at $x^\ast$, where $x^\ast_i=\left\{\begin{array}{cc}
                                                              \frac{1}{T} & i\leq T \\
                                                              0 & i>T
                                                            \end{array}\right.$.

\smallskip Noticing the following facts, we see that inequality (\ref{eqn:compare}) is just an application of the above problem.
\begin{displaymath}
\begin{array}{cc}
  Q^{\sigma}_{i,j}\leq \frac{1}{T} & \text{(Applying Lemma~\ref{lemma:probability})} \\
  \sum_{i=1}^{m} Q^\sigma_{i,j}=1 & \text{(According to the definition)} \\
  \forall i\leq T<i', P_{i,j}\geq P_{i',j} & \text{(Since $A_{i'}$ is weaker than $A_i$)}
\end{array}
\end{displaymath}
\end{proof}

\subsection{Optimal number of dominated players}\label{subsect:number_of_burdens}

Here we study the power of dominated players in another direction.
As we see in Subsection~\ref{subsect:useful}, by abandoning a redundant dominated player, Team~1 may decrease its utility.
In other words, it states that Team~1 may increase its utility by recruiting more dominated players.
Note that the utility of Team~1 will not decrease by recruiting more dominated players.
However, it is unclear that the utility will strictly increase by doing so.
For example, it is clear that recruiting $T$ dominated players is the same as recruiting $T-1$ players ---
in any case, if any team uses $T$ dominated players in a competition, it gets the lowest utility!

So, a natural question is,

\begin{question}
In order to get the maximum utility, how many dominated players should we recruit at least?
Is it possible that we need as many as $\Theta(T)$ such players?
\end{question}

The theorem below answers this question.

\begin{theorem}\label{thm:number_burdens}~~~
\begin{enumerate}
\item Suppose $U=U_E$. Recruiting $T-1$ dominated players can be better than $T-2$, but recruiting $T$ dominated players is the same as $T-1$.
 So, to achieve optimal utility, one may require $T-1$ dominated players. This number is tight.
\item Suppose $U=U_M$. Recruiting $\lfloor T/2 \rfloor$ dominated players can be better than $\lfloor T/2 \rfloor-1$,
    but recruiting $\lfloor T/2 \rfloor+1$ dominated players cannot be better than $\lfloor T/2 \rfloor$.
    So, to achieve optimal utility, one may require $\lfloor T/2 \rfloor$ dominated players, and this number is tight as well.
\end{enumerate}
\end{theorem}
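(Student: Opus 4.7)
The proof of Theorem~\ref{thm:number_burdens} decomposes into two pieces per case: an upper bound showing that playing more dominated players past the threshold does not help, and a lower bound exhibiting an explicit game where the threshold is sharp. My plan is to dispatch the upper bound by exploiting the mutual interchangeability of dominated players, and then to obtain the lower bound by a parametric generalization of Examples~\ref{example:useful} and~\ref{example:largest}.

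For the upper bound, fix a non-dominated roster and let $V_k$ denote the game value when Team~1 is additionally given $k$ dominated players. Since all dominated players have the identically-zero row of $P$, they are indistinguishable inside the dynamic program of Subsection~\ref{subsect:dp}: the DP state at any history collapses to $(R,q,Y,w)$, where $R$ is the remaining non-dominated set and $q$ is merely the \emph{count} of remaining dominated players. For $U_E$, any pure realization that plays all $T$ dominated players loses every round and achieves the minimum possible utility $-T/2$, which is already $\le V_{T-1}$. By interchangeability, any strategy in the $T$-dominated game can be replicated in the $(T-1)$-dominated game by rerouting its all-dominated contingency to an arbitrary alternative play (this is harmless, since $-T/2$ is the floor), yielding $V_T\le V_{T-1}$; the reverse inequality is immediate from monotonicity in the roster. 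The $U_M$ argument is identical in spirit: playing $\lfloor T/2\rfloor+1$ dominated players guarantees majority loss and hence utility $-1$, the floor of $U_M$, so the same interchangeability argument yields $V_{\lfloor T/2\rfloor+1}\le V_{\lfloor T/2\rfloor}$.

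For the lower bound, I would build on Example~\ref{example:useful}, where the non-dominated rows of $P$ form an identity matching and Team~2 has a single ``universal winner'' column $B_{T+1}$ that beats every non-dominated $A_i$. Specifically, take $n=T+1$, $m=T+j$, and let $P$ be the matrix whose upper-left $T\times T$ block is the identity, whose other entries in rows $1,\dots,T$ are zero, and whose rows $T+1,\dots,T+j$ are entirely zero (the dominated players). The intuition generalizes Example~\ref{example:useful}: each dominated player that Team~1 ``spends'' early forces Team~2 either to commit its universal winner $B_{T+1}$ against a guaranteed-loss dominated $A_i$, or to commit a specialized $B_r$ against a dominated $A_i$; either way, Team~2's ability to target the right non-dominated $A_i$ in later rounds is progressively drained. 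For $U_E$ I would compute $V_j$ by backward induction and verify strict monotonicity on $\{0,\dots,T-1\}$; an analogous construction (with parameters tuned to the parity of $T$) is intended to witness the $U_M$ lower bound at $\lfloor T/2\rfloor$.

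The main obstacle will be verifying strict improvement at \emph{every} increment in $j$ up to the threshold, rather than at only a single jump somewhere along the way. For $U_E$ the utility is fine-grained, so monotonicity should follow from the conjectured inductive recurrence that one round of equilibrium play reduces the $j$-dominated, $T$-round game to a $(j-1)$-dominated, $(T-1)$-round instance of the same form (up to a predictable constant shift in value). For $U_M$ the step-function utility is the delicate point: additional expected wins do not automatically improve $U_M$, and the crossing of the majority threshold $T/2$ must occur at exactly the claimed value of $j$. I expect the parity of $T$ to require separate bookkeeping, and if the clean family above does not cleanly witness strictness at every increment for $U_M$, my fallback is to insert a small tailored gadget into $P$ that forces the majority crossings at the intended values of $j$ while preserving the non-transitive structure that made the dominated players useful in the first place.
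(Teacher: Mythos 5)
Your upper-bound half is fine and is essentially the paper's own ``trivial direction'': under $U_E$ a strategy that ever fields the $T$-th dominated player has already hit the floor $-T/2$ on that contingency, and under $U_M$ fielding $\lfloor T/2\rfloor+1$ of them forces at least $\lfloor T/2\rfloor+1$ losses and hence utility $-1$, so such contingencies can be rerouted harmlessly. The genuine gap is in your lower-bound construction. You give Team~2 only \emph{one} invincible player ($n=T+1$), but the mechanism that forces Team~1 to need many dominated players is the exhaustion of Team~2's \emph{stock} of invincible players: Team~2 can guarantee a win in any round in which it holds an unused invincible player, and (so long as Team~2 plays only invincible players) every specialized $A_i$ that Team~1 fields and loses with turns $B_i$ into a fresh invincible player, so the stock shrinks only when Team~1 fields a dominated player. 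With a single invincible opponent, a single dominated recruit already breaks Team~2's guarantee, and the marginal value of further recruits dies out long before $T-1$. Concretely, your family already fails at $T=3$: running the dynamic program of Subsection~\ref{subsect:dp} on your game ($A_1,A_2,A_3$ specialized, $B_4$ the universal winner, $U=U_E$), the value with two dominated recruits equals the value with one recruit (Team~1's expected number of wins is $7/20$ in both cases), i.e.\ $V_{T-1}=V_{T-2}$, so the required strict gap at $T-1$ is not witnessed. The paper avoids this by scaling Team~2's surplus with the threshold: Example~\ref{example:E} takes $n=T+(T-1)$, i.e.\ $T-1$ invincible players, and Example~\ref{example:M} takes $n=T+\lfloor T/2\rfloor$. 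Then with only $T-2$ recruits Team~2 wins every round (spend all invincible players first, then counter an already-exposed $A_i$; claims C1/C3), while with $T-1$ recruits uniformly random play exhausts Team~2's invincible players with positive probability (claims C2/C4).

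Two further points. First, the ``main obstacle'' you identify---strict improvement at every increment of $j$---is a non-issue: the game value is nondecreasing in the number of recruited dominated players (extra players can simply be ignored), so a strict gap at the single step from $T-2$ to $T-1$ (resp.\ from $\lfloor T/2\rfloor-1$ to $\lfloor T/2\rfloor$) already implies that any smaller number is strictly suboptimal, which is all the tightness claim needs; the paper proves exactly and only this last step. Second, the genuinely hard part of the theorem is the one your proposal defers to an unspecified ``gadget'': for $U_M$ one must show that with $\lfloor T/2\rfloor$ recruits Team~1 wins at least $\lceil T/2\rceil$ rounds with positive probability against \emph{every} Team~2 strategy, and expected-win accounting does not transfer because $U_M$ is a step function. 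The paper handles this with a dedicated two-parameter induction (Lemma~\ref{lemma:gamma_game}, over the auxiliary games $\Gamma^C_{a,b}$), whose point is that every diagonal entry of the first-round matrix game leads to a subgame of value $>-1$, so uniformly random play keeps the value above $-1$. Nothing in your proposal substitutes for this step.
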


One direction in these claims are rather trivial; we should never use $T$ dominated players when $U=U_E$ or $\lfloor T/2 \rfloor+1$ players when $U=U_M$. To prove the other direction, we need to construct some examples in which recruiting $T-1$ (resp. $\lfloor T/2 \rfloor$) could be better than $T-2$ (resp. $\lfloor T/2 \rfloor-1$) when $U=U_E$ (resp. $U=U_M$). To construct such examples, an intuition is that we should make the current players in Team~1 as weak as possible. Our construction is as follows.

\begin{example}\label{example:E}
\begin{displaymath}
    \begin{gathered}
    T\geq 1, m=T,n=T+(T-1),\\
    U=U_E, P_{i,j}=\left\{\begin{array}{cc}
                                    1 & i=j \\
                                    0 & i\neq j
                                  \end{array}
    \right.
    \end{gathered}
\end{displaymath}
\end{example}

\begin{example}\label{example:M}
\begin{displaymath}
    \begin{gathered}
    T\geq 1, m=T,n=T+\lfloor T/2 \rfloor,\\
    U=U_M, P_{i,j}=\left\{\begin{array}{cc}
                                    1 & i=j \\
                                    0 & i\neq j
                                  \end{array}
    \right.
    \end{gathered}
\end{displaymath}
\end{example}

The following claims together prove Theorem~\ref{thm:number_burdens}.

\begin{enumerate}
\item[C1.] In Example~\ref{example:E}, if Team~1 only recruit $T-2$ dominated players, it can win no rounds and thus can get utility $-T/2$.
\item[C2.] In Example~\ref{example:E}, if Team~1 recruit $T-1$ dominated players, it can win a positive number of rounds in expected and thus gain utility more than $-T/2$.
\item[C3.] In Example~\ref{example:M}, if Team~1 only recruit $\lfloor T/2 \rfloor-1$ dominated players, it will always lose at least $\lfloor T/2 \rfloor+1$ rounds and thus can only get utility $-1$.
\item[C4.] In Example~\ref{example:M}, if Team~1 recruit $\lfloor T/2 \rfloor$ dominated players, it can sometimes win at least $\lceil T/2\rceil $ rounds and thus can gain utility more than $-1$.
\end{enumerate}

\begin{proof}[of C1] In this case Team~2 can win all the rounds by playing as follows.
In the first $T-1$ rounds, it selects the players $B_{T+1},\ldots,B_{2T-1}$ to play; and they all win.
Then, since Team~1 only has $T-2$ dominated players, at least one player in $A_1,\ldots,A_T$ has already played, denote it by $A_i$.
In the last round, Team~2 select $B_i$ and it definitely wins.
\end{proof}

\begin{proof}[of C3] In this case, by applying a strategy similar to C1, Team~2 can win all the first $\lfloor T/2 \rfloor+1$ rounds.\footnote{In this case Team~2 can actually win all the $T$ rounds.}
\end{proof}

\begin{proof}[of C2]
For convenience, we denote the $T-1$ dominated players by $A_{T+1},\ldots,A_{2T-1}$.
We argue that, if Team~1 applies the uniform random strategy (that is, select one unused player in $A_1,\ldots, A_{2T-1}$ uniformly random in each round),
then, Team~2 has no strategy to win all rounds all the time.
Suppose to the contrary that Team~2 can do it, it must select a player from $B_{T+1}\ldots,B_{2T-1}$ to play in the first round; otherwise there is a chance that it loses the first round.
Note that, since Team~1 apply the uniform random strategy, there is a chance that Team~1 select a dominated player in the first round.
If this happens, Team~2 must again select a player from $B_{T+1}\ldots,B_{2T-1}$ to play in the second round.
Once again, Team~1 might still select a dominated player in the second round.
By induction, there is chance that Team~1 select all the dominated players in the first $T-1$ rounds while Team~2 consumes all its $T-1$ invincible players in $B_{T+1}\ldots,B_{2T-1}$.
Then, Team~2 cannot win with certainty in the last round.
\end{proof}

The claim C4 is the most nontrivial. To prove it we first state the following lemma.

\newcommand{\uC}{\lceil C/2 \rceil}
\newcommand{\dC}{\lfloor C/2 \rfloor}

\begin{definition} For integers $a,b,C$ such that
\begin{equation}\label{eqn:abC}
C\geq 1, 0\leq a\leq \uC, 0\leq b\leq \dC,
\end{equation}
let $\Gamma^C_{a,b}$ denote the following instance of team competition:

\begin{displaymath}
    \begin{gathered}
        m=n=(C-a)+(\dC -b),\\
        T=C-a-b,
        	P_{i,j} = \left\{ \begin{array}{cc}
        1 &  i=j\leq C-a\\
        0 & otherwise
        \end{array}
        \right..
    \end{gathered}
\end{displaymath}

The utility is as follows.\footnote{Here the utility functions for two teams are not identical.
However, since it is still a zero-sum game, SPE strategies for the teams exists as before.
The requirement that the utility functions are identical is not necessary in our model.}
If Team~1 wins at least $\uC-a$ rounds, it gets utility $1$ and Team~2 gets $-1$;
otherwise, Team~1 gets utility $-1$ and Team~2 gets $1$.
\end{definition}

\begin{lemma}\label{lemma:gamma_game}
For integers $a,b,C$ satisfying condition (\ref{eqn:abC}), Team~1 can win utility larger than $-1$ in the game $\Gamma^C_{a,b}$.
\end{lemma}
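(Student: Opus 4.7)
I would prove the lemma by strong induction on $T = C - a - b$. The base case $T = 0$ forces $a = \lceil C/2 \rceil$ and $b = \lfloor C/2 \rfloor$, so the threshold $\lceil C/2 \rceil - a$ equals $0$ and Team~1 trivially achieves utility $+1$. For the inductive step, I distinguish two cases according to whether dominated players remain.

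When $b = \lfloor C/2 \rfloor$, no dominated or invincible players remain; both teams have $C - a$ real players, $P$ restricts to the identity, and Team~1 must win all $T = \lceil C/2 \rceil - a$ rounds in order to reach the threshold. Team~1 plays uniformly at random among its remaining reals. By an argument parallel to Lemma~\ref{lemma:random}, in round~$k$ Team~1's uniform pick matches Team~2's real pick with probability $1/(C - a - k + 1)$ regardless of Team~2's strategy, so the probability of winning every round equals $\prod_{k=1}^{T} \frac{1}{C - a - k + 1} = \lfloor C/2 \rfloor! / (C - a)! > 0$, giving expected utility $> -1$.

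When $b < \lfloor C/2 \rfloor$, Team~1 uses a mixed round-1 action: with probability $\lambda$ play a uniformly chosen dominated player, and with probability $1 - \lambda$ play a uniformly chosen real player. I would then do case analysis on Team~2's pure responses: (a) Team~1's dominated vs.\ Team~2's invincible cleanly reduces to $\Gamma^C_{a, b+1}$, handled by IH since $T$ drops by~$1$; (b) Team~1's dominated vs.\ Team~2's real $B_j$ forces Team~2 to waste a real and turns $A_j$ into an effectively-dominated player; (c) Team~1's real vs.\ Team~2's invincible consumes a real on Team~1 and an invincible on Team~2; (d) Team~1's real vs.\ Team~2's real gives Team~1 a probability-$1/(C - a)$ win which cuts the threshold by~$1$ (IH again), or else consumes one real on each side and orphans one of Team~1's remaining reals. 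By the minimax theorem I would pick $\lambda$ so that Team~2's best response leaves utility strictly above $-1$; the good case~(a) provides the IH-inherited strict inequality, and $\lambda$ is tuned to neutralize the dirty cases~(b)--(c).

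The main obstacle is that the sub-instances arising in sub-cases (b), (c), and the losing branch of (d) are not themselves of the form $\Gamma^{C'}_{a', b'}$: the effective real and effective dominated counts become asymmetric across the two teams, and the required threshold can be one larger than any valid choice of $C', a', b'$ would provide. To handle this I would strengthen the inductive statement to a broader class of ``asymmetric-$\Gamma$'' instances, permitting unequal effective-real counts on the two sides with the threshold defined analogously, and induct on the lexicographic pair $(T, m + n)$. The dirty sub-cases either decrease $T$ or keep $T$ fixed while strictly decreasing $m + n$, so they remain within the scope of the strengthened induction. Verifying the strengthened statement on its own base cases and then closing the minimax computation over $\lambda$ is the bulk of the technical work.
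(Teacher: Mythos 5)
Your skeleton coincides with the paper's: make $T=C-a-b$ decrease, handle $b=\lfloor C/2\rfloor$ directly by uniform play, and in the inductive step use the two ``diagonal'' reductions --- a dominated player of Team~1 against an invincible player of Team~2 yields a copy of $\Gamma^C_{a,b+1}$, and $A_j$ against $B_j$ (for $j\leq C-a$) yields $\Gamma^C_{a+1,b}$ --- which are exactly the paper's Facts~1 and~2. Your base-case computation (winning all rounds with probability $\prod_{k=1}^{T}\frac{1}{C-a-k+1}>0$ under uniform play) is a correct, quantitative version of the paper's Case~2.

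However, your inductive step has a genuine gap, and the route you propose for closing it is the wrong one. You are correct that sub-cases (b), (c) and the losing branch of (d) do not reduce to instances of the form $\Gamma^{C'}_{a',b'}$ --- but you do not need to analyze them at all. The utility in $\Gamma^C_{a,b}$ is either $+1$ or $-1$, so \emph{every} entry of the first-round matrix game is trivially bounded below by $-1$; only the two diagonal-type entries need the sharper bound $>-1$, and those are supplied by the induction hypothesis. Concretely: against any invincible column, your mixture hits a dominated-vs-invincible entry (value $>-1$ by IH) with probability $\lambda>0$ and every other entry is $\geq -1$, so the expectation is $>-1$; against any real column $B_j$, your mixture hits the $A_j$-vs-$B_j$ entry (value $>-1$ by IH) with probability $(1-\lambda)/(C-a)>0$, and again everything else is $\geq -1$. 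Hence \emph{any} fixed $\lambda\in(0,1)$ --- or simply the paper's uniform distribution over all remaining players --- finishes the induction in one line; there is no minimax tuning, and the strengthened ``asymmetric-$\Gamma$'' induction that you defer as ``the bulk of the technical work'' is unnecessary. Since you never actually carry out that strengthening, your proof is incomplete precisely at the step the lemma exists to establish. Two smaller points: you also need the separate base case $a=\lceil C/2\rceil$ with $T>0$ (the paper's Case~1, value $+1$ since the threshold is $0$), because in that situation $\Gamma^C_{a+1,b}$ violates condition~(\ref{eqn:abC}) and your appeal to the IH in case~(d) is undefined; and your phrase ``regardless of Team~2's strategy'' in the base case should be stated as a conditional probability given that Team~1 has won all previous rounds, which is what guarantees the matching partner $A_j$ is still unused.
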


\begin{proof} Consider three cases.
\begin{itemize}
\item[Case~1] $a=\uC$.\\
    In this case, Team~1 always get utility $1$, and so $\Gamma^C_{a,b}$ has value $1$, which is larger than $-1$.
\item[Case~2] $b=\dC$.\\
	The game $\Gamma^C_{a,b}$ can be restated as follows.
    \begin{itemize}
    \item $m=n=C-a$, $T=\uC-a$. \\
        Player $A_i$ can only defeat $B_i$ for $i$ in $1..m$.\\
        Team~1 gets utility $1$ if it wins all the rounds; and $-1$ otherwise.
    \end{itemize}
	We argue that the uniformly random strategy guarantees Team~1 an expected utility larger than $-1$.
    Equivalently speaking, by applying the uniformly random strategy, Team~1 has a chance to win all the rounds.
	The proof is as follows. In the first round, there is a positive chance that $A_i$ meets $B_i$ for some $i$.
    Then, in the second round, same thing happens with a positive chance.
    This could happen for each round. When these coincidences happen, Team~1 wins all the rounds.
\item[Case~3] $a<\uC$ and $b<\dC$.\\
    We use induction. Assume that $\Gamma^C_{a+1,b}$ and $\Gamma^C_{a,b+1}$ both have value larger than $-1$, we argue that so does $\Gamma^C_{a,b}$.

    The following facts follow from the definition of $\Gamma^C_{a,b}$.
    \begin{description}
    \item[Fact~1.] If the two teams select $A_i$ and $B_i$ for $i\leq C-a$ in the first round, it becomes a subgame that is equivalent to $\Gamma^C_{a+1,b}$.
    \item[Fact~2.] If the two teams select $A_i$ and $B_i$ for $i>C-a$ in the first round, it becomes a subgame that is equivalent to $\Gamma^C_{a,b+1}$.
    \end{description}
    Combining them with the induction hypothesis, we get
    \begin{description}
    \item[Fact~3.] If the two teams select players under the same index, it becomes a subgame whose value is larger than $-1$.
    \end{description}
    We know that the value of $\Gamma^C_{a,b}$ is equal to the value of the matrix game $M$, where
    $M(i,j)$ indicate the value of the subgame when Team~1 select $A_i$ and Team~2 select $B_j$ in the first round.
	Fact~3 implies that all the utilities on the diagonal of matrix $M$ are larger than $-1$.
	Therefore, by using uniformly random strategy over its players, Team~1 can win a utility larger than $-1$.
    Therefore, $\Gamma^C_{a,b}$ has value larger than $-1$. 		
\end{itemize}
\end{proof}

\begin{proof}[of C4]
Let $G$ denote the revised game of Example~\ref{example:M}, in which Team~1 has recruited $\lfloor T/2 \rfloor$ dominated players.
We could observe that game $G$ is almost the same as $\Gamma^T_{0,0}$.
To be more specific, when $T$ is odd, $G$ is exactly $\Gamma^T_{0,0}$;
when $T$ is even, the parameters $m,n,T,P$ in $G$ and $\Gamma^T_{0,0}$ are the same;
but the utility $U$ is slightly different.

Suppose that the value of $G$ is $-1$. Then, Team~2 has a strategy which guarantees a expected utility $-1$.
It means that Team~2 has a strategy which can always win $\lfloor \frac{T}{2} \rfloor+1$ or more rounds.
When Team~2 applies this strategy, Team~1 can never win $\lceil \frac{T}{2} \rceil$ rounds.
It further implies that the value of $\Gamma^T_{0,0}$ is also $-1$.
However, this contradicts with Lemma~\ref{lemma:gamma_game}. Therefore, the value of $G$ must be larger than $-1$.
\end{proof}

\subsection{Limitations of the dominated players}\label{subsect:powerfulness_of_burdens}

Although the presences of dominated players can affect the value of the game,
we conjecture that it will not be too much. A question is then,

\begin{question}
By abandoning a dominated player, how much utility might be lost in the worst case?
In other words, how much can a team gain by recruiting dominated players?
\end{question}

According to our simulations, we have the following conjecture that we cannot prove at the moment.

\begin{conjecture}
If $U=U_M$, we can gain at most 2/3 extra utility by recruiting arbitrary number of dominated players.
If $U=U_E$, we can gain at most 1 extra utility by recruiting arbitrary number of dominated players.
\end{conjecture}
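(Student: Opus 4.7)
My plan is to construct, for each of the two utility functions, an explicit defensive strategy for Team~2 in the augmented game $G$ that caps Team~1's expected payoff at $V(G^\ast, U) + c$, where $c = 1$ for $U_E$ and $c = 2/3$ for $U_M$, and $G^\ast$ is the game obtained by deleting all dominated players from Team~1. This reduces the conjecture to producing one strategy and bounding one quantity, bypassing the need to compute the full SPE of $G$. The natural candidate is a \emph{mirror} strategy $\widetilde\tau$ that runs a SPE strategy $\tau^\ast$ of Team~2 in $G^\ast$ on a phantom history: when Team~1 plays a non-dominated player, the real and phantom histories advance identically; when Team~1 plays a dominated player, Team~2 still plays some $B_j$ in $G$, and we mark $B_j$ as used on the phantom side without advancing the Team~1 slot. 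Which $B_j$ Team~2 chooses to burn in that case is the heart of the construction.

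For $U_E$, I would then prove by backward induction on the number of remaining rounds that, against $\widetilde\tau$, Team~1 wins at most one more round in expectation than it would in the phantom SPE of $G^\ast$. The induction hypothesis would be that the gap between real and phantom expected wins stays at most $1$ at every reachable pair of states, with the burn rule chosen to preserve this invariant. Combined with $U_E(t) = t - T/2$, this would give $V(G, U_E) \leq V(G^\ast, U_E) + 1$ directly, consistent with the $1/4$ gap observed in Example~\ref{example:useful}.

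For $U_M$, the same mirror coupling yields an upper bound on the gap but not the sharp $2/3$. Here I would split on the phantom outcome: when Team~2 wins the phantom play by a margin of at least two, no single extra win in $G$ can flip the sign of $U_M$ and the contribution to the gap is zero; the remaining cases (ties and one-round phantom losses) can contribute up to $2$ each, and the total must be controlled by the distribution of phantom margins. I would try to show that the contribution of ties and margin-one losses is at most $2/3$ in the worst case, using the random-permutation structure behind Theorem~\ref{thm:random} together with Example~\ref{example:largest} (where the $(1/6, 5/6)$ win/loss split of a random matching on three elements gives the extremal profile) as a template.

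The main obstacle I foresee is the inductive step for $U_E$. When Team~1 plays a dominated player, Team~2 must commit to a burn choice before seeing Team~1's future randomness, yet Team~1 can condition on that choice. Finding a burn rule that is robust against adversarial continuations---say, burning the $B_j$ whose removal least decreases the value of the remaining phantom game---seems to require an exchange argument in the spirit of Lemma~\ref{lemma:transitive}, but adapted to a setting where players are not transitive and the natural monotonicity fails. The sharper $2/3$ constant for $U_M$ is a second hurdle: even granting the $U_E$ bound, the threshold in $U_M$ breaks linearity, and obtaining the extremal $2/3$ seems to require an argument tailored to the specific probability distributions that arise at the SPE of $G^\ast$.
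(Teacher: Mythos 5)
The first thing to be clear about: the paper does not prove this statement at all. It is stated as a conjecture, supported only by the authors' simulations (they explicitly say they cannot prove it), so there is no reference proof to compare yours against; a correct argument here would be new research, not a reconstruction. Your proposal, by your own admission, is not that argument: both of its load-bearing steps are left open, and each has a concrete defect beyond mere incompleteness. The most basic one is that the mirror coupling is not well-defined. When Team~1 plays a dominated player and Team~2 ``burns'' some $B_j$ on the phantom side without advancing Team~1's slot, the resulting phantom state has Team~2 having used strictly more players than Team~1; that is not a history of $G^\ast$, so ``the phantom SPE value'' and ``phantom margins'' at such states are undefined objects. If instead you let the phantom game run to completion after the real game ends, it needs $d$ further rounds (where $d$ is the number of dominated players Team~1 actually used), which requires Team~2 to have $d$ unused phantom players, i.e.\ $n \geq T + d$ --- not guaranteed, since the conjecture allows arbitrarily many dominated players and any $n \geq T$. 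So even the object you intend to run induction on must be rebuilt (say, as a phantom game in which the burned players are deleted), and that changes its value and breaks the appeal to $\tau^\ast$ being a SPE of $G^\ast$.

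Second, even granting a repaired coupling, the two quantitative claims are asserted rather than derived, and they are exactly the conjecture. For $U_E$, the invariant ``the real-minus-phantom expected-win gap stays at most $1$ under the right burn rule'' is the entire content of the bound; you name a candidate rule (burn the $B_j$ whose removal least decreases the phantom value) but give no exchange argument, and you correctly note that the machinery of Lemma~\ref{lemma:transitive} is unavailable because transitivity and monotonicity fail here. Note also that any such proof must use Team~2's redundancy in an essential way: by Theorem~\ref{thm:special}, when $n = T$ and $U=U_E$ the gain from dominated players is exactly $0$, so the constant $1$ can only be approached when $n > T$, yet your coupling makes no distinction between these cases. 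For $U_M$, the margin decomposition is correct but circular: a flip changes utility by $2$, so the $2/3$ bound is equivalent to showing that the flip probability is at most $1/3$ against an optimal defense, and nothing in the proposal bounds that probability; pointing to the $(1/6,5/6)$ profile of Example~\ref{example:largest} identifies the conjectured extremal instance but is not an argument that it is extremal. Your side computations are right (the gap in Example~\ref{example:useful} is indeed $1/4$, and Example~\ref{example:largest} realizes exactly $2/3$), but as it stands the proposal is a research plan whose two open steps coincide with the open conjecture itself.
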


\section{Throwing a match and discarding a player}

Recall the card game between Alice and Bob in Subsection~\ref{subsect:card}.
We shall point out that, recruiting a dominated player in this context can be thought of applying a cheating action, which is to \emph{throw a match} by not placing any card in that round.
In the mentioned card game, if Alice and Bob are not allowed to throw a match, Alice can get expected utility $-1/3$;
if Alice is allowed to throw a match, she can get expected utility $1/3$.
This can be computed according to the method shown in Subsection~\ref{subsect:dp}.
Therefore, throwing a match is profitable if permitted.

It may seem unnatural to let a team throw a match like this. The following alternative cheating action called \emph{discarding}, which may seem more natural, is still profitable for the team.

Discarding is defined as follows. Alice (Team~1) is allowed to discard one of its cards and agrees to lose in that round; however the discarded card is never revealed to Bob (Team~2).
By discarding, Alice do not gain one more card at hand, unlike the case of throwing a match.

However, if discarding is allowed, it may still be beneficial. We give an instance in which one may gain extra utility by discarding.
Formally, we have the following theorem.

\begin{theorem}\label{thm:discarding}
There exists a game $G$ such that $V_K(G)>V_{K-1}(G)$, where $V_K(G)$ denotes the value of game $G$ in which Team~1 is allowed to discard at most $K$ players.
\end{theorem}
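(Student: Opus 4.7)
The plan is to prove Theorem~\ref{thm:discarding} by constructing an explicit small game $G$ for which $V_1(G) > V_0(G)$; that is, we exhibit a single concrete instance in which allowing one discard strictly improves the value. A natural candidate is the card game of Subsection~\ref{subsect:card}, for which $V_0(G) = -1/3$ by Theorem~\ref{thm:random}; the theorem is then reduced to demonstrating a single Team~1 behavioral strategy that uses at most one discard and achieves expected utility strictly greater than $-1/3$ against every Team~2 counter-strategy.

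I would first formalize the discard action inside the extensive-form framework of Subsection~2.2 by augmenting Team~1's action set at each history with an extra action ``discard $A_i$'' for every unused $A_i$; this action consumes $A_i$ from Team~1's hand and is declared a loss for Team~1 in that round, but Team~2 observes only that Team~1 discarded, not the identity of the discarded player. The resulting game has imperfect information: after a discard, Team~2's information set corresponds to a non-degenerate posterior over Team~1's remaining hand, induced by Team~1's (common-knowledge) discard strategy. I would then exhibit an explicit behavioral strategy $\sigma^{*}$ for Team~1 that, in round~$1$, discards some ``signal'' card with a carefully chosen positive probability and plays the other cards with the complementary probability mass, and then plays the continuation subgame optimally. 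To show $V_1(G) \geq u(\sigma^{*}, \tau)$ for every Team~2 strategy $\tau$ it suffices, by bilinearity of the payoff, to minimize $u(\sigma^{*}, \tau)$ over Team~2's pure best responses; this is a finite case check on a small game tree that can be done by hand or verified via a sequence-form linear program. If the resulting minimum strictly exceeds $-1/3$, we conclude $V_1(G) > V_0(G)$.

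The main obstacle is that discarding is strictly weaker than the ``throwing a match'' operation that corresponds to recruiting a dominated player in Section~4, because a discard still consumes a real card from Team~1's hand; so the concrete constructions used in Claim~\ref{claim:burden_is_useful} and Example~\ref{example:largest} do not transfer verbatim. The proof must therefore identify a game in which the informational benefit of hiding the identity of the discarded player strictly outweighs the cost of losing the discarded card; intuitively, the discarded card should be one whose continued presence in Team~1's hand would have constrained Team~2's play, so that removing it without revealing which card was removed shifts Team~2's best response in Team~1's favor. I would try the card-game candidate first, and, should the symmetric uniform solution happen to force $V_1 = V_0$, perturb $P$ slightly (or increase $m$ by one, turning the construction into a hybrid between a discard game and a redundant-player game) to break that symmetry and then verify the strict inequality by direct computation.
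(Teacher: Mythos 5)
Your proposal correctly formalizes discarding as an imperfect-information action (the opponent sees that a discard occurred but not which card was removed), and your observation that discarding is strictly weaker than the ``throw a match'' operation of Section~4 --- because it still consumes a card --- is exactly the point the paper makes. However, the argument has a genuine gap: it never actually produces a game for which the strict inequality holds. The entire proof is conditional on a computation (``if the resulting minimum strictly exceeds $-1/3$\dots'') that is not carried out, and you yourself concede the candidate may fail and would then need to be ``perturbed'' or modified --- at which point there is no construction left to check. An existence theorem needs a witness, and the card game of Subsection~\ref{subsect:card} is not shown to be one; it is entirely possible that with $m=n=T=3$ the cost of burning one of only three cards cancels the informational benefit. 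A second issue is scope: the paper's proof establishes the inequality $V_K(G)>V_{K-1}(G)$ for \emph{every} $K$ by exhibiting a family of games parameterized by $K$, whereas your plan addresses only $V_1>V_0$.

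For comparison, the paper's proof takes Example~\ref{example:D}: $m=T=K+1$, $n=2K+1$, $U=U_E$, and $P_{i,j}=1$ iff $i=j$ (so $B_{K+2},\ldots,B_{2K+1}$ are ``invincible,'' and each $B_i$ with $i\le K+1$ becomes invincible once $A_i$ has been seen). Claim C5 shows that with only $K-1$ discards Team~2 always has an invincible player available and wins every round, pinning $V_{K-1}(G)$ at the minimum; Claim C6 shows that with $K$ discards Team~1 can randomize over both the order of its cards and the $K$ rounds in which it discards, so that with positive probability Team~2 is forced to spend all $K$ of its initially invincible players against discarded rounds and can then lose round $K+1$. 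This counting argument avoids any equilibrium computation in the imperfect-information game: it only needs a lower bound on Team~1's payoff under one explicit strategy, against all Team~2 responses. If you want to salvage your route, you should adopt a construction of this type (or fully verify a concrete small instance) rather than leaving the witness unspecified.
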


\begin{example}\label{example:D}
\begin{displaymath}
    \begin{gathered}
        m=K+1,n=2K+1,T=K+1,\\
        U=U_E, P=\left\{\begin{array}{cc}
                          1 & i=j \\
                          0 & i\neq j
                        \end{array}
        \right.
    \end{gathered}
\end{displaymath}
\end{example}

The following claims together imply Theorem~\ref{thm:discarding}.

\begin{itemize}
\item[C5] In Example~\ref{example:D}, if Team~1 is only allowed to discard $K-1$ times, it cannot win in any round.
\item[C6] In Example~\ref{example:D}, if Team~1 is allowed to discard $K$ times, it can win some rounds in expectation.
\end{itemize}

\begin{proof}[of C5] Suppose that Team~1 is only allowed to use discarding $K-1$ times.
Observer that, for any $i$ in $1..m$, after $A_i$ has played and revealed by Team~1, player $B_i$ becomes invincible that he would win with certainty if he plays in the next rounds.
Notice that there are $K$ invincible players at beginning (which are $B_{K+1}\ldots B_{2k+1}$) and Team~1 has only $K-1$ chances to hide a player by discarding.
So, in any round, Team~2 has an invincible player at hand. Therefore, Team~2 can win all the rounds.
\end{proof}

\begin{proof}[of C6] Consider the following strategy for Team~1. First, Team~1 randomly chooses an order of the players (say, each order with possibility $1/m!$), and then it randomly chooses exactly $K$ of its players so that these players will be discarded while playing. We argue that this strategy guarantees Team~1 to win positive rounds in expected. It reduces to prove that no strategy of Team~2 can win all the rounds against this strategy. Suppose that Team~2 can do so. In the first round, it must select an invincible player (i.e. a player in $B_{T+1}\ldots B_{T+k}$). Otherwise, there is a chance that it loses the first round. And then, we know that there is a chance Team~1 discards in the first round. If this happens, Team~2 must also select an invincible player in the second round. Again, it is possible that Team~1 still discards in this round. Continuing this process we see that, it could happen that, in the first $K$ rounds Team~2 use all its $K$ invincible players, while on the other hand Team~1 uses discarding $K$ times. Then, Team~2 could lose in the $K+1$-th round. \end{proof}

Note that, when discarding is allowed, the value of the game cannot be computed easily, because the competition becomes an extensive-game with imperfect information.

\section{Conclusions}

In this paper, we study a novel game-theoretic model of situations where two teams make sequential decisions about which of a set of exhaustible actions to select in each round.
These actions can be interpreted as team members, cards in a hand, etc.
We present a simple SPE for the case where there are no redundant players or the strength of players is transitive.
For the other case, we exhibit evidence that the redundant dominated players cannot be easily discounted in their contribution to team performance, which may appear counterintuitive.
We investigate the power of the dominated players in three directions:
  1. When do they influence the value of the competition?
  2. If additional dominated players can be recruited, how many should be required to attain the maximum utility?
  3. How much utility might be lost at most if we abandon them?
We obtain several nontrivial results that fully or partially answer these questions.
We believe that our results are of particular interests to both designers and players of team competitions.

\clearpage

\bibliographystyle{abbrv}

\end{document}